\newtheorem{thm}{Theorem}\crefname{thm}{Theorem}{Theorems}
\newtheorem{lem}{Lemma}\crefname{lem}{Lemma}{Lemmas}
\crefname{section}{Section}{Sections}\crefname{appendix}{Appendix}{Appendices}
\crefname{equation}{Eq.}{Eqs.}\crefname{figure}{Fig.}{Figs.}
\DeclareMathOperator{\sign}{sign}
\newcommand*{\Z}{\mathbb Z}
\newcommand*{\C}{\mathbb C}
\newcommand*{\R}{\mathbb R}
\newcommand*{\rmi}{\mathrm{i}}
\newcommand*{\rme}{\mathrm{e}}
\newcommand*{\rmd}{\mathrm{d}}
\newcommand*{\ot}{\otimes}
\newcommand*{\op}{\oplus}
\begin{document}

\title{Rigorous free fermion entanglement renormalization from wavelet theory}

\author{Jutho Haegeman}
\thanks{JH, BS, MW, VBS are primarily responsible for the content of this work.}
\affiliation{Department of Physics and Astronomy, Ghent University, Krijgslaan 281, S9, 9000 Gent, Belgium}

\author{Brian Swingle}
\thanks{JH, BS, MW, VBS are primarily responsible for the content of this work.}
\affiliation{Department of Physics, Massachusetts Institute of Technology, Cambridge, MA 02139, USA}
\affiliation{Department of Physics, Harvard University, Cambridge, MA 02138, USA}
\affiliation{Department of Physics, University of Maryland, College Park, MD 20742, USA}

\author{Michael Walter}
\thanks{JH, BS, MW, VBS are primarily responsible for the content of this work.}
\affiliation{Stanford Institute for Theoretical Physics, Stanford University, Stanford, CA 94305, USA}

\author{Jordan Cotler}
\thanks{JC collaborated on related approaches and assisted in preparing the manuscript.}
\affiliation{Stanford Institute for Theoretical Physics, Stanford University, Stanford, CA 94305, USA}

\author{Glen Evenbly}
\thanks{GE has independent unpublished data arriving at a similar 2D network structure and assisted in preparing the manuscript.}
\affiliation{D\'epartement de physique, Universit\'e de Sherbrooke, Sherbrooke, QC J1K 2X9, Canada}

\author{Volkher B.~Scholz}
\thanks{JH, BS, MW, VBS are primarily responsible for the content of this work.}
\affiliation{Department of Physics and Astronomy, Ghent University, Krijgslaan 281, S9, 9000 Gent, Belgium}
\affiliation{Institute for Theoretical Physics, ETH Z\"urich, Wolfgang-Pauli-Str. 27, 8093 Z\"urich, Switzerland}

\begin{abstract}
We construct entanglement renormalization schemes which provably approximate the ground states of non-interacting fermion nearest-neighbor hopping Hamiltonians on the one-dimensional discrete line and the two-dimensional square lattice. These schemes give hierarchical quantum circuits which build up the states from unentangled degrees of freedom. The circuits are based on pairs of discrete wavelet transforms which are approximately related by a ``half-shift": translation by half a unit cell. The presence of the Fermi surface in the two-dimensional model requires a special kind of circuit architecture to properly capture the entanglement in the ground state. We show how the error in the approximation can be controlled without ever performing a variational optimization.
\end{abstract}

%\pacs{Valid PACS appear here}
%\keywords{Suggested keywords}

\maketitle

%%%%%%%%%%%%%%%%%%%%%%%%%%%%%%%%%%%%%%%%%%%%%%%%%%%%%%%%%%%%%%%%%%%%%%%%%%%%%%
\section{Introduction}
%%%%%%%%%%%%%%%%%%%%%%%%%%%%%%%%%%%%%%%%%%%%%%%%%%%%%%%%%%%%%%%%%%%%%%%%%%%%%%
Characterizing quantum phases of matter and computing their low-temperature physical properties is one of the central challenges of quantum many-body physics. Part of the challenge is that many phases and phase transitions are best characterized in terms of their pattern of entanglement, as opposed to, say, their pattern of symmetry breaking~\cite{landau1937phasetransitions}. As an extreme case, topological phases of matter (see~\cite{wen2012toporeview} for a review) are solely characterized by entanglement, and since a topological state is gapped, it is insensitive to any local perturbation~\cite{hastings2005topostable,bravyi2010topostable} and so need not break any symmetry. By contrast, metallic states with Fermi surfaces have many low energy excitations, but they can also be usefully characterized in terms of their anomalously high entanglement. In this work we are concerned with such metallic states.

A useful idealization is to focus on ground state physics, where the entanglement entropy of spatial subsystems provides powerful non-local diagnostics of phases.
For example, most known ground states obey an ``area law": the entanglement entropy of a subsystem scales as the boundary size of the subsystem~\cite{eisert2010colloquium}.
Topological states have a subleading (in subsystem size) shape-independent ``topological entanglement entropy" term which partially characterizes the state~\cite{kitaev2006tee,levin2006tee}.
Another diagnostic is the scaling of entropy with subsystem size itself: metals logarithmically violate the area law when they have a Fermi surface~\cite{wolf2006violation,gioev2006entanglement,swingle2010entanglement}.

Yet, discussions based on entanglement entropy are only the beginning.
We can more fully characterize the pattern of entanglement in a quantum state by giving a recipe for physically constructing it from unentangled degrees of freedom.
Such a recipe is called a \emph{quantum circuit} and constitutes a sequence of local unitary transformations which produce the desired state from an unentangled state.
While symmetry breaking states, e.g.~a ferromagnet with all spins aligned, can be caricatured by an unentangled state (or more realistically by a state in which only nearby degrees of freedom are entangled), it is known that topological states must have a high degree of non-local entanglement as measured by the minimal number of circuit layers needed to prepare them from an unentangled state~\cite{bravyi2006topogeneration}.
The anomalously high entanglement in metallic states similarly implies that any circuit which prepares a metallic state starting from an unentangled state must have many layers.

In terms of calculations, quantum circuits often yield efficient classical algorithms for computing physical properties such as correlation functions. Moreover, they give rise to a local description of the multipartite entanglement structure in terms of multilinear maps. As such they form an important subclass of \emph{tensor network states}, a very successful variational class of quantum states which has been shown to be applicable in situations when other methods fail, e.g., due to a fermion sign problem in Monte Carlo methods (for a review see~\cite{murg2009tnsreview,orus2014tnsreview}).

In terms of experiments, quantum circuits provide a precise operational procedure, implementable on a sufficiently versatile quantum simulator or quantum computer, to prepare interesting states. For example, while it might be difficult to directly cool a system to its ground state, a quantum circuit provides an alternative way to directly engineer the ground state.

In this work we provide quantum circuits that, when acting on a suitable unentangled state, prepare approximations to the metallic ground states of certain one- and two-dimensional non-interacting fermion Hamiltonians.
This is a non-trivial task in part because these states of matter are highly entangled and violate the area law. We work primarily in the thermodynamic limit of an infinite lattice, but our constructions can also be applied to finite-size systems. The circuits themselves are composed of layers having a hierarchical renormalization group structure, in which the size of the system is doubled (or halved) after each layer.
In one dimension, the scheme is a version of the multiscale entanglement renormalization ansatz (MERA)~\citep{vidal2008mera}, while in two dimensions it yields an interesting branching network structure~\cite{evenbly2014branchmera}.
Such renormalization group inspired quantum circuits have been useful in describing a variety of gapless states in one dimension~\cite{montangero2009critmera,pfeifer2009critmera,corboz2010intfmera,evenbly2010ffmera}, and inroads have been made in two-dimensional systems~\cite{evenbly2010entanglement,aguado2008topomera,cincio20082dmera,evenbly20092dmera,swingle2016sqrt}.
They have also been instrumental for recent progress in our understanding of the holographic duality~\cite{swingle2012entanglement,qi2013exact,pastawski2015holographic,hayden2016holographic}.
As in the pioneering work~\cite{evenbly2016entanglement}, our construction is based on discrete wavelet transforms, although we take a somewhat different approach here.

The key features of our work are the following. Our circuits are hand-crafted---no variational optimization is used---and come with provable error bounds.
The essential physics is a certain ``half-shift" condition discussed in detail below~\cite{selesnick2002design}. Our two-dimensional circuit provides a representation of a state with a Fermi surface, albeit on a special lattice and at a special filling where the Fermi surface has no curvature.
The error scaling with circuit size is consistent with the hypothesis that an exact renormalization group circuit can be implemented using a quasi-local circuit with rapidly decaying tails~\cite{swingle2016ssourcery}.
Together, our results address a long standing challenge: to rigorously construct tensor networks for gapless states of matter with Fermi surfaces.

This paper is organized as follows.
We first first briefly set the stage for our work and review the basics of non-interacting fermions systems and some wavelet terminology (\cref{sec:setup}).
Then we discuss the one-dimensional model and construct MERA quantum circuits that approximate its ground state (\cref{sec:1d}).
Next, we generalize to the two-dimensional model which has a square Fermi surface (\cref{sec:2d}).
We then explain how to obtain \emph{a priori} error bounds for our constructions (\cref{sec:circuits and errors}), and we illustrate the effectiveness of our results through numerical experiments (\cref{sec:numerics}).
We conclude in \cref{sec:conclusions}.

%%%%%%%%%%%%%%%%%%%%%%%%%%%%%%%%%%%%%%%%%%%%%%%%%%%%%%%%%%%%%%%%%%%%%%%%%%%%%%
\section{Setup}\label{sec:setup}
%%%%%%%%%%%%%%%%%%%%%%%%%%%%%%%%%%%%%%%%%%%%%%%%%%%%%%%%%%%%%%%%%%%%%%%%%%%%%%
Throughout this paper we work in the context of non-interacting fermion systems.
At the single-particle level, the systems can be described by a Hilbert space spanned by a basis of single-particle states or \emph{modes}~$\phi_\alpha$.
This data depends on the physical setup, but in the cases below, $\phi_\alpha$ can be taken to be a single-particle state in which the particle is localized at site~$\alpha$ in some lattice~$\Z^d$.
The many-body description is achieved by \emph{second quantization}, i.e.,
passing to creation, $a_\alpha^\dagger$, and annihilation, $a_\alpha$, operators which create and destroy a fermion in the single-particle state~$\phi_\alpha$.

The Hamiltonians we study will all consist of fermion bilinears of the form~$H = \sum_{\alpha, \beta} a_\alpha^\dagger h^{(1)}_{\alpha,\beta} a_\beta$ where~$h^{(1)}_{\alpha,\beta}$ can be regarded as a single-particle Hamiltonian acting on the mode space.
The eigenstates of such a ``quadratic" Hamiltonian are many-body quantum states of fermions that obey Wick's theorem.
This in turn implies that these states are uniquely specified by the two point correlation function~$G_{\alpha,\beta} = \langle a_\alpha^\dagger a_\beta \rangle$.
In particular, the ground state of~$H$ is obtained from the state with no fermions by diagonalizing the matrix~$h^{(1)}_{\alpha,\beta}$ and placing one fermion in each mode with negative single-particle energy.
It is therefore possible to carry out much of the analysis of the ground state at the level of single-particle states.
In particular, the circuit approximation of the ground state is constructed from single-particle unitaries~$u = \mathrm{e}^{\mathrm{i} z}$.
The corresponding ``quadratic'' many-body unitary~$U = e^{\mathrm{i} Z}$ with~$Z = \sum_{\alpha,\beta} a_\alpha^\dagger z_{\alpha,\beta} a_\beta$ then acts as~$U^\dagger a_\alpha U = \sum_\beta u_{\alpha,\beta} a_\beta$.
Any state obeying Wick's theorem can always be prepared from an unentangled state (consisting of fermions localized to sites) by acting with such a ``quadratic'' unitary.

The models we study are translation invariant, so we immediately know how to diagonalize the single particle Hamiltonian $h^{(1)}$ using the Fourier transform (up to a small eigenvalue problem in case of having several modes per site).
However, the Fourier transform is \emph{not} a local unitary transformation (it mixes modes arbitrarily far away), so it fails to fully capture the special physics of the ground state and its real-space entanglement renormalization structure.
For example, the Fourier transform typically takes an unentangled state to a state with volume law entanglement.
Importantly, however, the ground state is invariant under basis transformations within the filled and empty spaces (negative and positive energy eigenspaces, respectively), where the former is also known as the \emph{Fermi sea}.
We can therefore prepare the same state by filling modes which are suitable linear combinations of negative energy eigenstates, chosen to be approximately local in real space.
Vice versa, we can \emph{approximate} the ground state by filling strictly local modes that are approximately supported within the Fermi sea.

Wavelets~\cite{mallat2008wavelet} provide a suitable basis to construct such local modes.
As first discussed in~\cite{evenbly2016entanglement}, the hierarchical structure of a wavelet transform provides the single-particle version of an entanglement renormalization quantum circuit.
In one dimension, wavelet transforms can be specified by a \emph{scaling filter}~$h_s$ and a \emph{wavelet filter}~$h_w$.
An input signal~$\psi\in\ell_2(\Z)$ (the space of square summable sequences) is then decomposed by convolution and downsampling into a scaling output~$\psi_s[n] = \sum_m h^*_s[m-2n] \psi[m]$ and a wavelet output~$\psi_w[n] = \sum_m h^*_w[m-2n] \psi[m]$.
Intuitively, the wavelet filter should project onto details of a certain scale, while the scaling filter should project on all features up to this scale.
The (discrete) wavelet transform is obtained by iterating this scheme: the scaling output is taken as the input signal for the next iteration.
It decomposes the Hilbert space into orthogonal subspaces, each describing details at a certain scale.
Its inverse reassembles the input signal from the wavelet outputs at all scales. %, $\psi[n] = \sum_k h_s(n-2k) \psi_s(k) + h_w(n-2k) \psi_w(k)$.

If we design the wavelet transform such that it separates negative-energy from positive-energy modes, we obtain a renormalization scheme from the ``quadratic'' unitary~$U_{RG}$ corresponding to one step of the wavelet transform.
If the filters have finite length then~$U_{RG}$ is a finite-depth quantum circuit~\cite{evenbly2016representation}, meaning it is composed of a finite number of layers of two-site unitaries.
The unitary~$U_{RG}$ constitutes one renormalization step:
Given the ground state~$\ket\psi$ of the Hamiltonian,
\begin{equation*}
\ket\psi \approx  U_{RG} \bigl( \ket\psi \ot \ket\chi \bigr),
\end{equation*}
where on the right-hand side,~$\ket\psi$ is the ground state on the renormalized lattice and~$\ket\chi$ is some unentangled state.
Crucially, the disentangled sites are interleaved with the renormalized lattice and each unitary layer is a local transformation.
By composing many layers of~$U_{RG}$, we thus obtain a quantum circuit that approximately prepares the ground state.
The layout of the circuit is illustrated in \cref{fig:1dmera}.
The bottom of the figure corresponds to the state~$\ket\psi$, each layer of red and green blocks constitutes the quantum circuit implementing~$U_{\text{RG}}$, the product states~$\ket 1\ket 0$ on half of the sites make up~$\ket\chi$, and the lines which go up into the next layer correspond to~$\ket\psi$ on the other half of the sites, which can be identified with the renormalized lattice.
To realize this approach, we still need to design finite-length filters~$h_s,h_w$ such that the wavelet transform separates negative from positive energy modes.
We will now discuss in detail how this can be done systematically and to arbitrarily high fidelity for two fundamental model systems.

%%%%%%%%%%%%%%%%%%%%%%%%%%%%%%%%%%%%%%%%%%%%%%%%%%%%%%%%%%%%%%%%%%%%%%%%%%%%%%
\section{Fermions on the discrete line}\label{sec:1d}
%%%%%%%%%%%%%%%%%%%%%%%%%%%%%%%%%%%%%%%%%%%%%%%%%%%%%%%%%%%%%%%%%%%%%%%%%%%%%%
We first consider the fermion nearest-neighbor hopping Hamiltonian on the one-dimensional infinite discrete line,
\begin{equation}\label{eq:hamiltonian 1d}
H = - \sum_{n \in \Z} a_n^\dagger a_{n+1} + a_{n+1}^\dagger a_n.
\end{equation}
After blocking neighboring sites using the modes~$b_{1,n} = (-1)^n a_{2n}$ and~$b_{2,n} = (-1)^n a_{2n+1}$, corresponding to the even and odd sublattices, respectively, we can write
\begin{equation*}
H = -\sum_{n} b_{1,n}^\dagger b_{2,n} - b_{2,n}^\dagger b_{1,n+1} + b_{2,n}^\dagger b_{1,n} - b_{1,n+1}^\dagger b_{2,n}.
\end{equation*}
In terms of momentum modes~$b_j(k) = \sum_n b_{j,n} \rme^{-\rmi n k}$, the Hamiltonian is
\begin{equation}\label{eq:hamiltonian fourier}
   H = \int_{-\pi}^{\pi} \frac{\rmd k}{2 \pi} \begin{bmatrix} b_{1}(k)\\  b_{2}(k)\end{bmatrix}^\dagger\!\!\begin{bmatrix} 0 & \rme^{-\rmi k}-1\\ \rme^{\rmi k}-1 & 0 \end{bmatrix} \! \begin{bmatrix} b_{1}(k)\\  b_{2}(k)\end{bmatrix}.
\end{equation}
This is the discretized one-dimensional Dirac Hamiltonian using the staggered Kogut-Susskind prescription~\cite{kogut1975hamiltonian}.
The eigenmodes of the single-particle Hamiltonian~$h^{(1)}(k)$, i.e., the $k$-dependent $2 \times 2$~matrix in \cref{eq:hamiltonian fourier}, are
\begin{equation*}
  \phi_{\pm} (k) = \frac1{\sqrt 2} \begin{bmatrix} 1 \\ \pm \rmi \,\rme^{\rmi \frac{k}{2}}\end{bmatrix},
\end{equation*}
with energies~$\pm 2\sin(k/2)$ and velocities~$\pm \cos(k/2)$, corresponding to left ($-$) and right ($+$) movers~\footnote{Throughout this paper, we always consider the momenta~$k$ to be in~$(-\pi,\pi)$.}.
Thus, the many-body ground state is obtained by filling the negative energy eigenmodes~$\phi_{-\sign(k)}(k)$, corresponding to the Fermi sea~$[-\pi/2,\pi/2]$ in the original lattice.

%.............................................................................
\begin{figure}
  \includegraphics[width=0.45\textwidth]{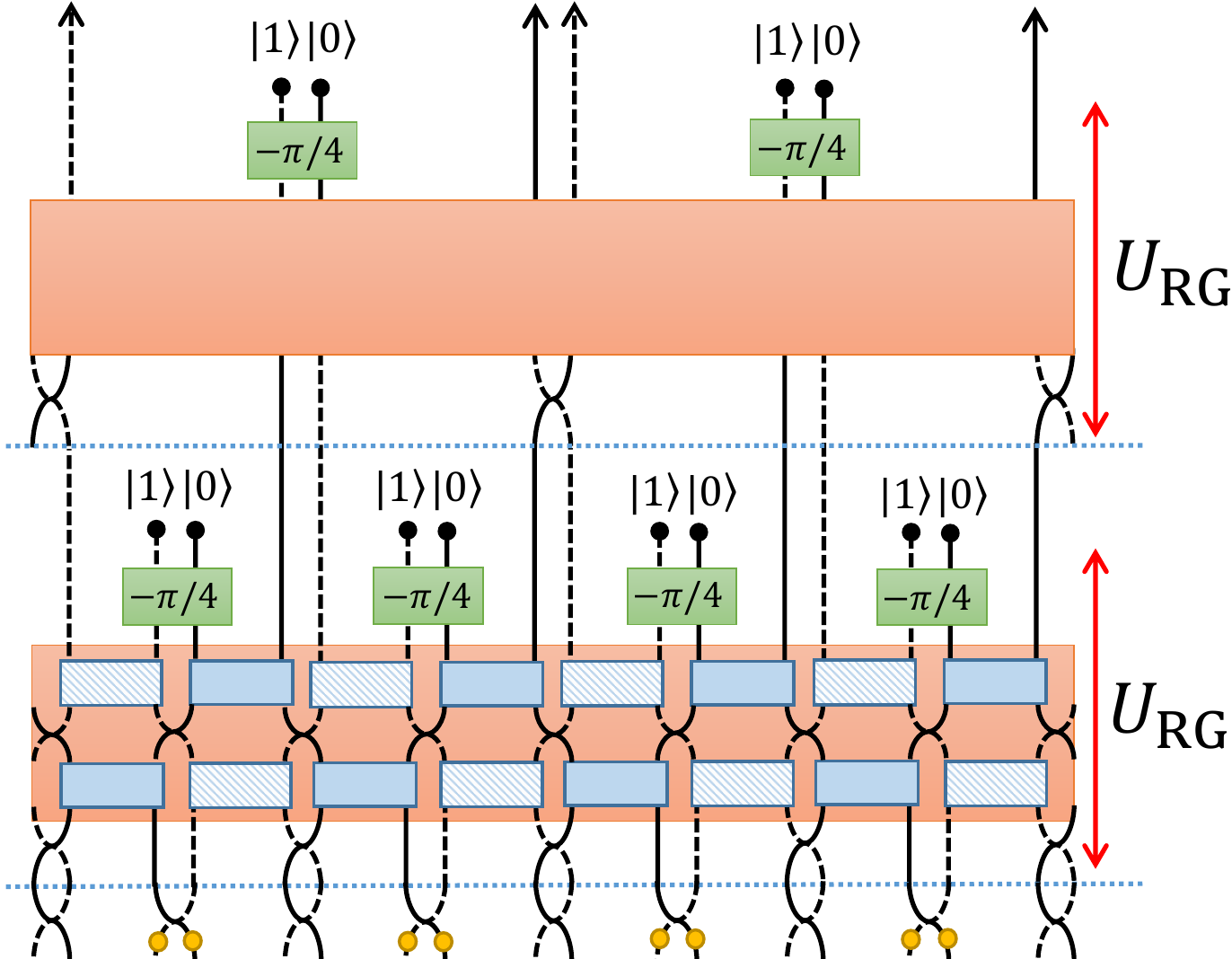}
  \caption{\emph{MERA quantum circuit preparing an approximate ground state of the one-dimensional hopping Hamiltonian~\eqref{eq:hamiltonian 1d},}
  using the notation of~\cite{evenbly2016entanglement}.
  To start, we apply phase gates (yellow circles) and swap sublattices in order to let the following gates act either on the even or the odd sublattice.
  The bulk of the MERA corresponds to two independent wavelet transforms, one for~$h_s$ (even sublattice) and one for~$g_s$ (odd sublattice).
  Each step of the two wavelet transforms gives rise to a layer (red boxes) of two parallel quantum circuits of depth~$K+L$, consisting of $2\times 2$ unitary gates (solid/hatched blue boxes).
  At the end of each layer, we apply second quantized Hadamard unitaries (green boxes) that couple both sublattices and we occupy the negative energy modes.
  The figure illustrates the case $K=L=1$ and $\mathcal L=2$ layers.
  As we increase~$K$, $L$, and the number of layers~$\mathcal L$, we systematically obtain better and better approximations to the ground state (see \cref{sec:circuits and errors}).}\label{fig:1dmera}
\end{figure}
%.............................................................................

To design a quantum circuit for the ground state, it is convenient to diagonalize the single-particle Hamiltonian into negative and positive energy eigenmodes by using the unitary~$u(k) = d(k) \, h_2$, where~$h_2 =\frac1{\sqrt2} \left[\begin{smallmatrix}1&~~1\\1&-1\end{smallmatrix}\right]$ is the Hadamard gate and~$d$ is of the form
\begin{equation}\label{eq:1dD}
d(k) \propto \begin{bmatrix} 1 & 0\\ 0 & -\rmi \sign(k)\rme^{\rmi k/2}\end{bmatrix},
\end{equation}
where, importantly, we are free to choose a~$k$-dependent phase.
Note that the matrix~$d(k)$ is discontinuous around~$k = 0$ because of the $\sign$~function, but not around~$k=\pm\pi$, where the discontinuity in the $\sign$ is cancelled by the discontinuity in the half-shift phase factor (and the result is even smooth).
The many-fermion ground state corresponding to the diagonalized single-particle Hamiltonian
\begin{equation}\label{eq:disentangled H}
u(k)^\dagger h^{(1)}(k) u(k) = \begin{bmatrix}-2\sin\left(\lvert k\rvert/2\right) & 0 \\ 0 & 2\sin\left(\lvert k\rvert/2\right)\end{bmatrix}.
\end{equation}
is disentangled and can be prepared in a completely local fashion by filling the even sublattice, corresponding to the first component, while leaving the odd sublattice empty.

We will now show that the ``quadratic'' unitary corresponding to~$u(k)$ can be well-approximated by a finite-depth quantum circuit.
The Hadamard~$h_2$ is not $k$-dependent and thus its second quantization simply corresponds to a local unitary between neighboring sites of the original non-blocked lattice.
Hence it suffices to focus on the unitary~$d(k)$, which is block-diagonal between the even and odd sublattice.
In view of the quantum circuit/wavelet correspondence discussed in \cref{sec:setup}, we thus need to design a \emph{pair} of wavelet transforms, acting on the even and odd sublattice and specified by filters~$h_s,h_w$ and~$g_s,g_w$, respectively, whose Fourier transforms are related by
\begin{equation}\label{eq:selesnickwavelet}
  g_w(k) \approx -\rmi \sign(k) \rme^{\rmi \frac k2} h_w(k).
\end{equation}
One can verify that \cref{eq:selesnickwavelet} is fulfilled if the corresponding scaling filters satisfy~\footnote{Here we use that we can choose the wavelet filters as the conjugate mirrors of the scaling filters. For concreteness, we will choose $h_w(k) = e^{ik} h_s^*(k+\pi)$ and likewise for~$g_w,g_s$. It is important to observe that~$k+\pi\equiv k-\sign(k)\pi\in(-\pi,+\pi)$ to deal correctly with the half-shift.}
\begin{equation}\label{eq:approxhalfdelay}
  % g_s(k) \approx \rme^{-\rmi \frac k2} h_s(k).
  h_s(k) \approx \rme^{\rmi \frac k2} g_s(k).
\end{equation}
The phase difference~$\rme^{\rmi k/2}$ in Fourier space implies that the two scaling filters are related by a \emph{half-shift} or half-delay in real space.
Its appearance is not surprising, given the translation invariance of the original (unblocked) lattice~\cite{kingsbury1999image}.
It is easily seen that the outputs of the inverse wavelet transforms are then at \emph{all} levels related approximately as in \cref{eq:1dD}, as illustrated in \cref{fig:wavelets1d}, and so can be used to implement~$d(k)$~\footnote{Since $\psi(n) = \sum_k h_s(n-2k) \psi_s(k) + h_w(n-2k) \psi_w(k)$, this is clear at the first level of the transform, but it can easily be shown to hold at all levels of the wavelet transform by using~\cref{eq:selesnickwavelet,eq:approxhalfdelay}; see \cref{app:proof}.}.
In other words, the same filters can be used throughout and a scale invariant circuit will be obtained.

%.............................................................................
\begin{figure}
\includegraphics[width=0.5\textwidth]{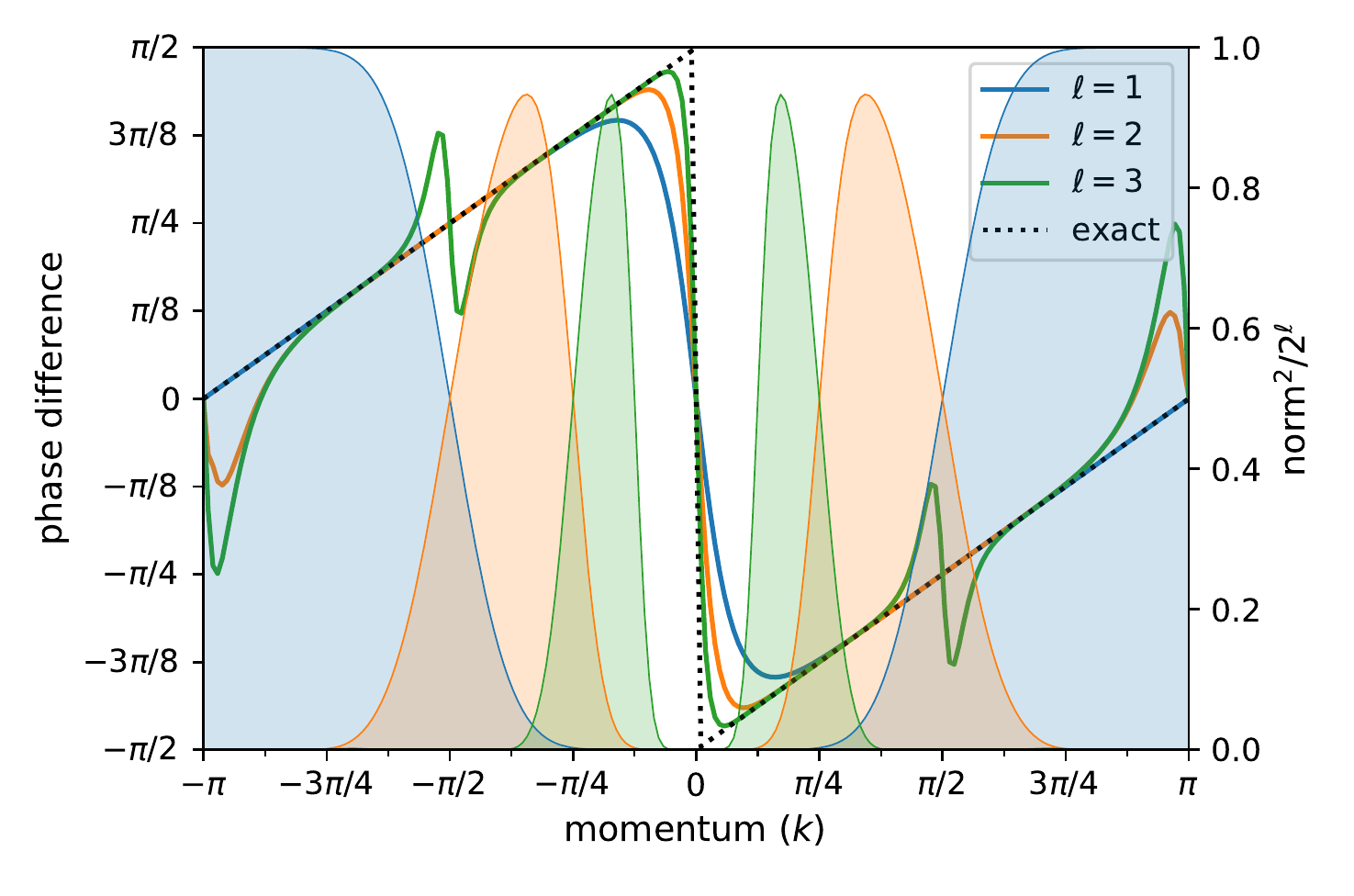}
\caption{\emph{Fourier spectrum of the outputs of the inverse wavelet transform} for levels $\ell=1,2,3$.
Both filters have equal magnitude in momentum space (shaded regions).
The relative phase difference of their Fourier transforms (solid lines) approximates the exact phase difference of the two components of~\cref{eq:1dD} or, equivalently, of the negative energy eigenstates of the single-particle Hamiltonian (black dotted line).}\label{fig:wavelets1d}
\end{figure}
%.............................................................................

Due to the discontinuity of the half-shift at~$k=\pm\pi$, a pair of local filters cannot satisfy~\eqref{eq:approxhalfdelay} exactly.
Fortunately, \emph{approximate} solutions were studied in great detail in the context of filter design in the signal processing literature~\cite{kingsbury1999image,selesnick2002design}.
Selesnick devised a general algorithm to construct filter pairs, indexed by two integers~$K$, $L$ and having length~$2(K+L)$, whose Fourier transforms have exactly equal magnitude and differ by a phase~$\rme^{\rmi \delta(k)}$~\cite{selesnick2002design}.
The parameter~$K$ determines the usual moment condition used in the wavelet literature, which controls the smoothness of the wavelets and the localization properties of the filters in momentum space.
The difference between~$\rme^{\rmi \delta(k)}$ and the ideal half-shift is controlled by the parameter $L$ and goes down quickly in the region around~$k=0$.
While~$\rme^{\rmi \delta(k)}$ is continuous at~$k=\pm \pi$ and therefore necessarily deviates from the half-shift in this region, the support of the scaling filter is, in this same region, suppressed with increased~$K$.
This allows us to control the error of the approximation~\eqref{eq:approxhalfdelay} by increasing the parameters~$K$ and~$L$ (see right panel of \cref{fig:renorm1d}).

We thus obtain entanglement renormalization quantum circuits by combing the circuits for the ``quadratic'' unitaries corresponding to the wavelet transforms, constructed using the procedure described in \cref{sec:setup}, with the Hadamard unitaries and the disentangled ground state of the diagonal Hamiltonian~\eqref{eq:disentangled H}.
These circuits, illustrated in \cref{fig:1dmera}, are composed of self-similar layers, each of which is a quantum circuit of finite depth $K+L$ that consists of nearest-neighbor $2\times 2$-unitary matrices.
This corresponds to a bond dimension~$\chi=2^{K+L}$ if the circuit is represented in the standard form of a binary MERA, written in terms of single layers of disentanglers and isometries~\cite{vidal2007entanglement,vidal2008mera}.
These quantum circuits allow us to rigorously approximate correlation functions of the ground state of the Hamiltonian~\eqref{eq:hamiltonian 1d} as discussed in \cref{sec:circuits and errors} and illustrated numerically in \cref{sec:numerics}.

From the perspective of the renormalization group, it is natural to consider the coarse-grained or \emph{renormalized Hamiltonian}.
Recall that the original single-particle Hamiltonian is of the form~$h^{(1)}(k) = e(k) (\cos(k/2) \sigma_y - \sin(k/2) \sigma_x)$, where~$e(k) = 2 \sin(k/2)$.
Because of the downsampling, both the wavelet and scaling outputs couple~$h^{(1)}(k)$ and~$h^{(1)}(k+\pi)$ (i.e., a single layer of a binary MERA is invariant under shifts over two sites).
The Hamiltonian can be naturally divided into three terms---corresponding to the scaling modes, the wavelet modes, and the mutual ``interaction'' between scaling and wavelet modes, respectively, each of which are a free fermion Hamiltonian.
The \emph{wavelet Hamiltonian} takes the exact single-particle form~$-\epsilon_\ell(k) \, \sigma_z$ (after the additional local Hadamard transforms).
Here, $\ell$ denotes the level of the wavelet transform, viz.\ the layer of the MERA, and~$\epsilon_\ell(k)>0$, so that its ground state is a product state in real space, obtained by filling the first mode on every site, in agreement with~\cref{eq:disentangled H}.
If \cref{eq:approxhalfdelay} is satisfied exactly, then the \emph{scaling Hamiltonian} or \emph{renormalized Hamiltonian} has the structure~$e_\ell(k) (\cos(k/2) \sigma_y - \sin(k/2) \sigma_x)$, where only the eigenvalues~$\pm e_\ell(k)$ change with the level~$\ell$, but not the eigenvectors; in general this is still true approximately.
This is the proclaimed scale invariance, and it provides an alternative way to see that the same pairs of scaling and wavelet filters should be used in every layer.
The coarse-grained dispersion relation~$e_\ell(k)$ does eventually reach a fixed point (up to a scaling $2^\ell$ that accounts for the rescaled lattice spacing), as illustrated in the left panel of \cref{fig:renorm1d}.
Note that there is also a residual \emph{wavelet-scaling interaction} term, originating from the overlap between the momentum space support of the wavelet and scaling filters, so that the Hamiltonian is not exactly block diagonal.
In particular, the dispersion relation~$e_{\ell+1}(k)$ is not simply~$e_\ell(k/2)$ (the lower half of the dispersion relation of the preceding level), and~$\lim_{\ell\to\infty} 2^\ell e_\ell(k)$ does not simply converge to~$\lvert k\rvert$ for all $k$ due to deviations around~$k=\pm \pi$ (see also the left panel in \cref{fig:renorm1d}).
An exact block diagonalization would require filters with non-overlapping support, which are therefore nonlocal in real space. While this behavior is more closely approximated with increasing~$K$, the magnitude of the interaction term decays at most polynomially in~$K$.
However, full block diagonalization of $H$ is too strong of a requirement and would allow the creation of arbitrary eigenstates by replacing the product states with a plane-wave state within a single layer.
For the ground state itself, convergence of correlation functions is still exponential in~$K$ and~$L$, as discussed in \cref{sec:circuits and errors}.

%.............................................................................
\begin{figure}
\includegraphics[width=0.5\textwidth]{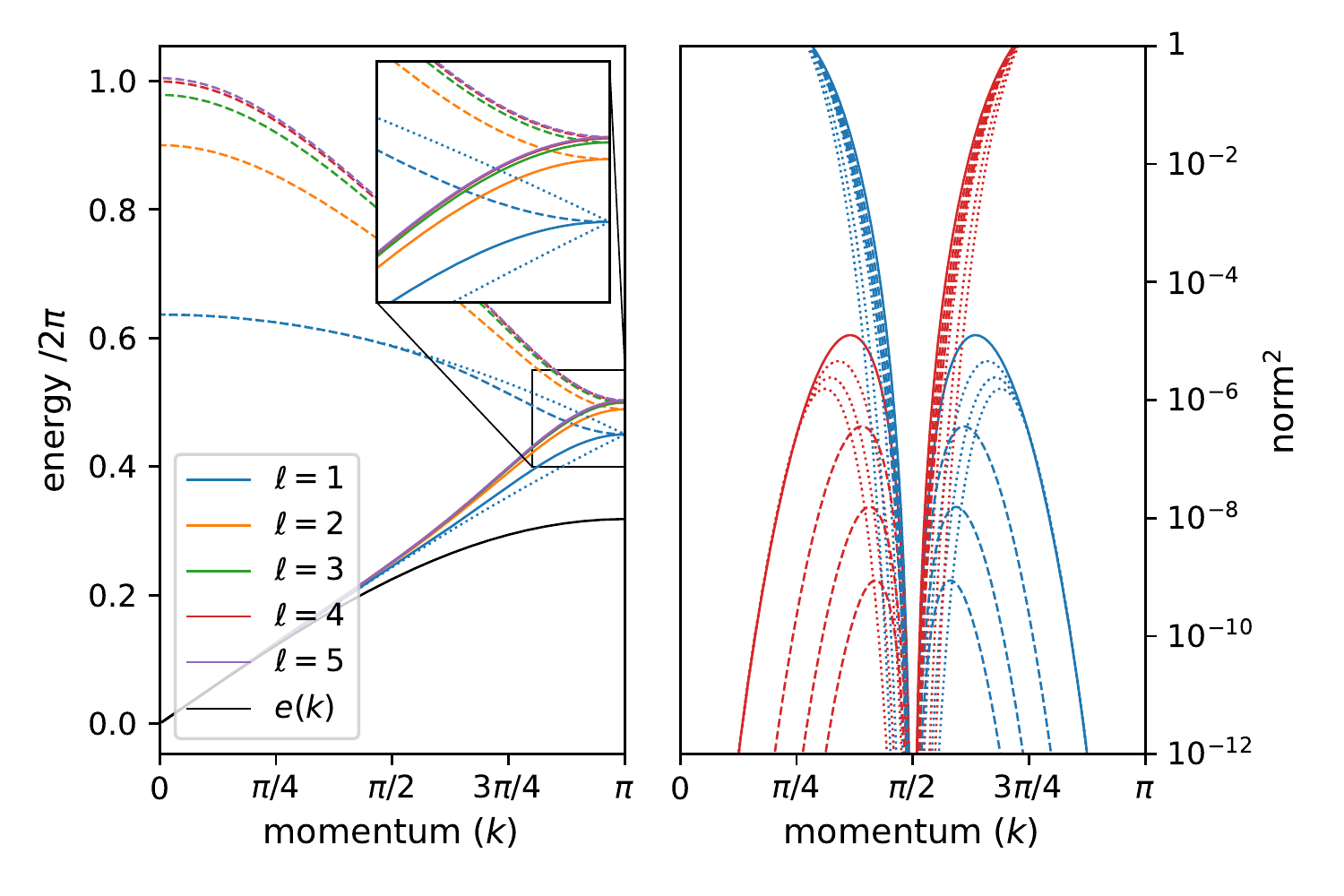}
\caption{\emph{Left panel:} Plots of the eigenvalues~$2^\ell e_\ell(k)$ of the scaling Hamiltonian (solid lines) and the eigenvalues~$2^\ell \epsilon_\ell(k)$ of the wavelet Hamiltonian (dashed lines) for levels~$\ell=1,\ldots,5$.
We observe convergence to a fixed-point Hamiltonian.
The original dispersion~$e(k)$ is shown by the solid black line.
The residual off-diagonal interaction between scaling and wavelet modes explains why the eigenvalues~$e_1(k),\epsilon_1(k)$ deviate slightly from~$e(k/2),e(k/2+\pi)$ (dotted blue lines).\\
\emph{Right panel:}
The single-particle modes obtained from level $\ell=1$ of the wavelet transform, translated back to the original lattice before blocking, should have momentum space support inside (blue) or outside (red) the Fermi sea~$[-\pi/2,+\pi/2]$.
While the wavelets exactly vanish at the Fermi points, small errors originate from the side lobes on the opposite side of the Fermi points.
The solid lines are~$K=L=4$.
For fixed~$L=4$ and $K=6,8,10$, the support is pushed away from the Fermi surface but the magnitude of the side lobes does not decrease much (dotted lines).
For fixed~$K=4$ and $L=6,8,10$, the side lobes appear to decrease exponentially fast (dashed lines).}\label{fig:renorm1d}
\end{figure}
%.............................................................................

%%%%%%%%%%%%%%%%%%%%%%%%%%%%%%%%%%%%%%%%%%%%%%%%%%%%%%%%%%%%%%%%%%%%%%%%%%%%%%
\section{Square lattice and Fermi surface}\label{sec:2d}
%%%%%%%%%%%%%%%%%%%%%%%%%%%%%%%%%%%%%%%%%%%%%%%%%%%%%%%%%%%%%%%%%%%%%%%%%%%%%%

We now extend our construction to fermions hopping on the two-dimensional infinite square lattice:
\begin{equation}\label{eq:hamiltonian 2d}
  H = - \sum_{m,n\in\Z} a_{m,n}^\dagger a_{m+1,n} + a_{m,n}^\dagger a_{m,n+1} + h.c.
\end{equation}
We again start by focusing on the single-particle domain, and then later transform everything into second-quantized form.
The two-dimensional problem we study is special because of the Fermi surface structure: the two-dimensional fermion Greens function $\langle a^\dagger_{x,y} a_{0,0} \rangle$ factorizes into two one-dimensional Greens functions, one which depends $x+y$ and one which depends on $x-y$. Thus, as in the one-dimensional case, we decompose the lattice into an even and odd sublattice, now defined by demanding that the sum of both coordinates is even or odd, respectively; and we likewise shift the Brillouin zone by momentum~$(\pi,\pi)$, resulting in new mode operators~$b_{1,x,y}=(-1)^{x+y} a_{x+y,x-y}$ and~$b_{2,x,y}=(-1)^{x+y} a_{x+y+1,x-y}$, with corresponding momentum modes~$b_i(k_x,k_y)$, $i=1,2$.
Note that these momenta are now defined with respect to the even/odd sub-lattice and hence are rotated by~$45$ degrees with respect to the original lattice.
% \begin{align*}
% H = - \bigl( \sum_{x,y\in\Z} &a_{x+y,x-y}^\dagger a_{x+y+1,x-y} \\
% + &a_{x+y+1,x-y}^\dagger a_{(x+1)+(y+1),(x+1)-(y+1)} \\
% + &a_{x+y,x-y}^\dagger a_{x+(y-1)+1,x-(y-1)} \\
% + &a_{x+y+1,x-y}^\dagger a_{(x+1)+y,(x+1)-y} + h.c. \bigr) \\
% = - \bigl( \sum_{x,y\in\Z} &b_{1,x,y}^\dagger b_{2,x,y} + b_{2,x,y}^\dagger b_{1,(x+1),(y+1)} \\
% - &b_{1,x,y}^\dagger b_{2,x,y-1} - b_{2,x,y}^\dagger b_{1,(x+1),y} + h.c. \bigr) \\
% = - \int \frac{d\vec k}{(2\pi)^2} \bigl( &b_1(\vec k)^\dagger b_2(\vec k) + b_2(\vec k)^\dagger b_1(\vec k) e^{i k_x} e^{i k_y} \\
% - &b_1(\vec k)^\dagger b_2(\vec k) e^{-i k_y} - b_2(\vec k)^\dagger b_1(\vec k) e^{i k_x} + h.c. \bigr) \\
% = - \int \frac{d\vec k}{(2\pi)^2} &b_1(\vec k)^\dagger b_2(\vec k) (1 - e^{-i k_y}) \\
% + &b_2(\vec k)^\dagger b_1(\vec k) (e^{i k_x} e^{i k_y} - e^{i k_x}) + h.c. \bigr) \\
% \end{align*}
This transformation effectively decouples the~$x$ and~$y$ direction, as the corresponding one-particle Hamiltonian is now of the form
\begin{equation*}
  h^{(1)} = -\begin{bmatrix} 0 & (1 - \rme^{-\rmi k_x})(1-\rme^{-\rmi k_y})\\ (1 - \rme^{+\rmi k_x})(1-\rme^{+\rmi k_y})&0\end{bmatrix}.
\end{equation*}
Its eigenvalues are products of the eigenvalues in the one-dimensional case, $\pm 4 \sin(k_x/2) \sin(k_y/2)$, with eigenmodes
\[ \phi_\pm(k_x,k_y)=\frac1{\sqrt2}\begin{bmatrix}1\\\pm\rme^{\rmi\frac{k_x+k_y}2}\end{bmatrix}. \]
As in the one-dimensional case, the eigenmodes exhibit a phase difference between the two sub-lattices corresponding to a half-shift in real space, but now the half-shift is in both lattice directions.
The positive and negative energy eigenmodes are given by $\phi_{\pm\sign(k_x)\sign(k_y)}(k_x,k_y)$, respectively, and are thus discontinuous around both~$k_x=0$ and $k_y=0$, as illustrated in the left panel of \cref{fig:2dhopping}.

It is now clear that we can diagonalize the single-particle Hamiltonian with the unitary $u(k) = d(k_x) d(k_y) \, h_2$,
where~$d$ is the block-diagonal unitary~\eqref{eq:1dD} and $h_2$ the Hadamard gate defined previously.
We can implement
\begin{equation}\label{eq:2dD}
d(k_x) d(k_y) \propto \begin{bmatrix} 1 & 0\\ 0 & - \sign(k_x)\sign(k_y)\rme^{\rmi (k_x+k_y)/2}\end{bmatrix}
\end{equation}
using the tensor products of two one-dimensional wavelet transforms as before---one acting in the $x$-direction and the other in the $y$-direction.
More specifically, let us denote by~$W \psi = \psi_s \op \psi_w$ a single step of the one-dimensional wavelet transform with filters~$h_w,h_s$.
Then
\begin{equation}\label{eq:outputWW}
  (W \ot W) \psi = \psi_{ss} \op \psi_{sw} \op \psi_{ws} \op \psi_{ww},
\end{equation}
which we identify as a single step of the two-dimensional separable wavelet transform.
In particular, the wavelet-wavelet component~$\psi_{ww}$ corresponds to the filter $h_{ww}(k_x,k_y)=h_w(k_x)h_w(k_y)$, and similarly if we use the one-di\-men\-si\-o\-nal filters $g_s,g_w$ instead.
Thus, \cref{eq:selesnickwavelet} implies that
\begin{equation*}
  h_{ww}(k_x,k_y) \approx -\sign(k_x)\sign(k_y) \rme^{\rmi \frac{k_x+k_y}{2}} \,g_{ww}(k_x,k_y),
\end{equation*}
which is precisely the desired phase relation between the two components of $d(k_x) d(k_y)$ (see left panel in \cref{fig:2dhopping}).
To obtain the tensor product of the two wavelet transforms, we now iteratively apply~$W\ot W$ to the scaling-scaling component~$\psi_{ss}$, as well as $W \ot I$ to~$\psi_{sw}$ and $I \ot W$ to~$\psi_{ws}$~\footnote{In contrast, in the two-dimensional separable wavelet transform the three components $\psi_{sw},\psi_{ws},\psi_{ws}$ together usually make up the detail coefficients.}.
The resulting transform is thus labeled by two levels, $\ell_x$ and $\ell_y$, corresponding to the number of renormalization steps in each direction.

%.............................................................................
\begin{figure}
\includegraphics[width=0.23\textwidth]{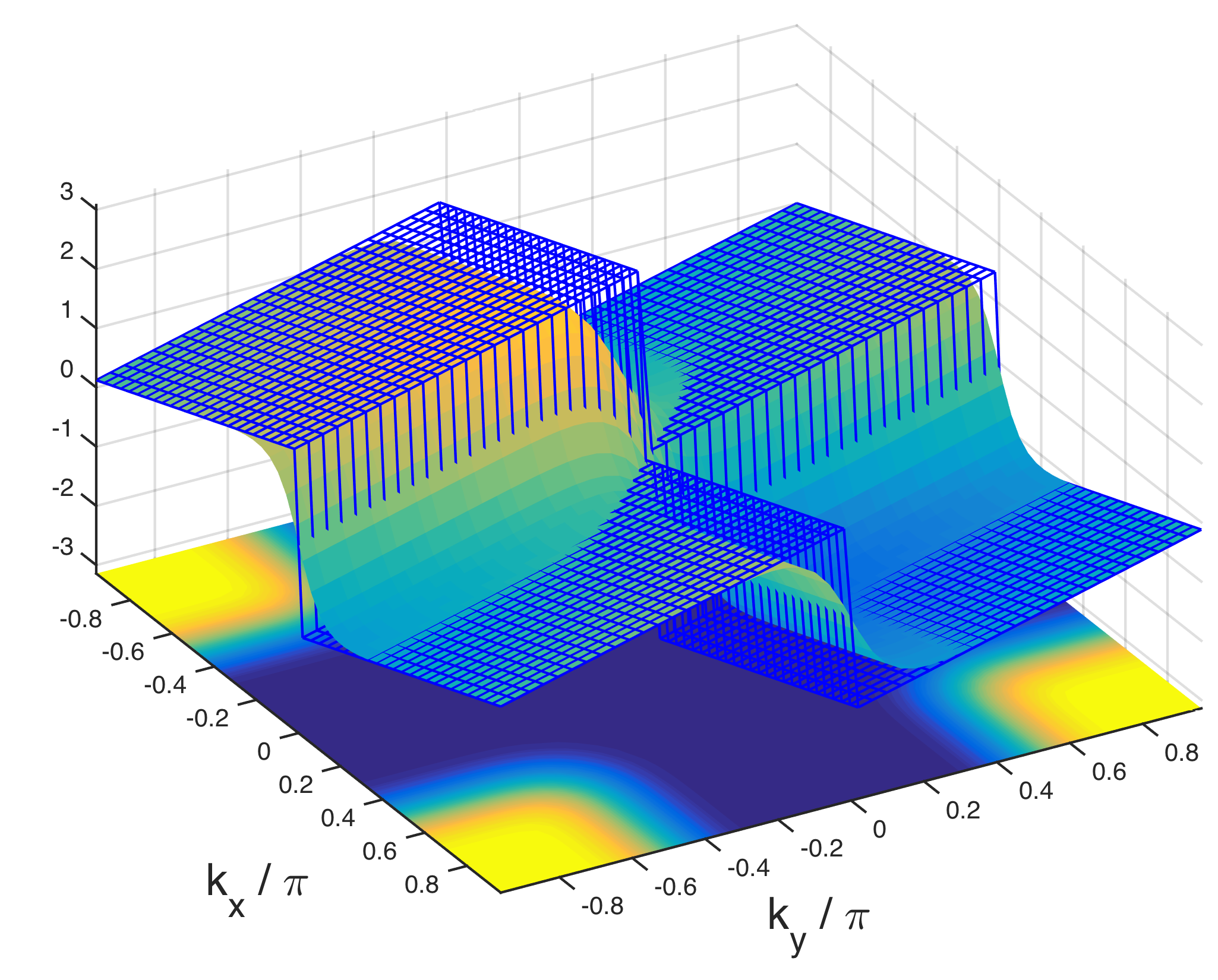}\includegraphics[width=0.23\textwidth]{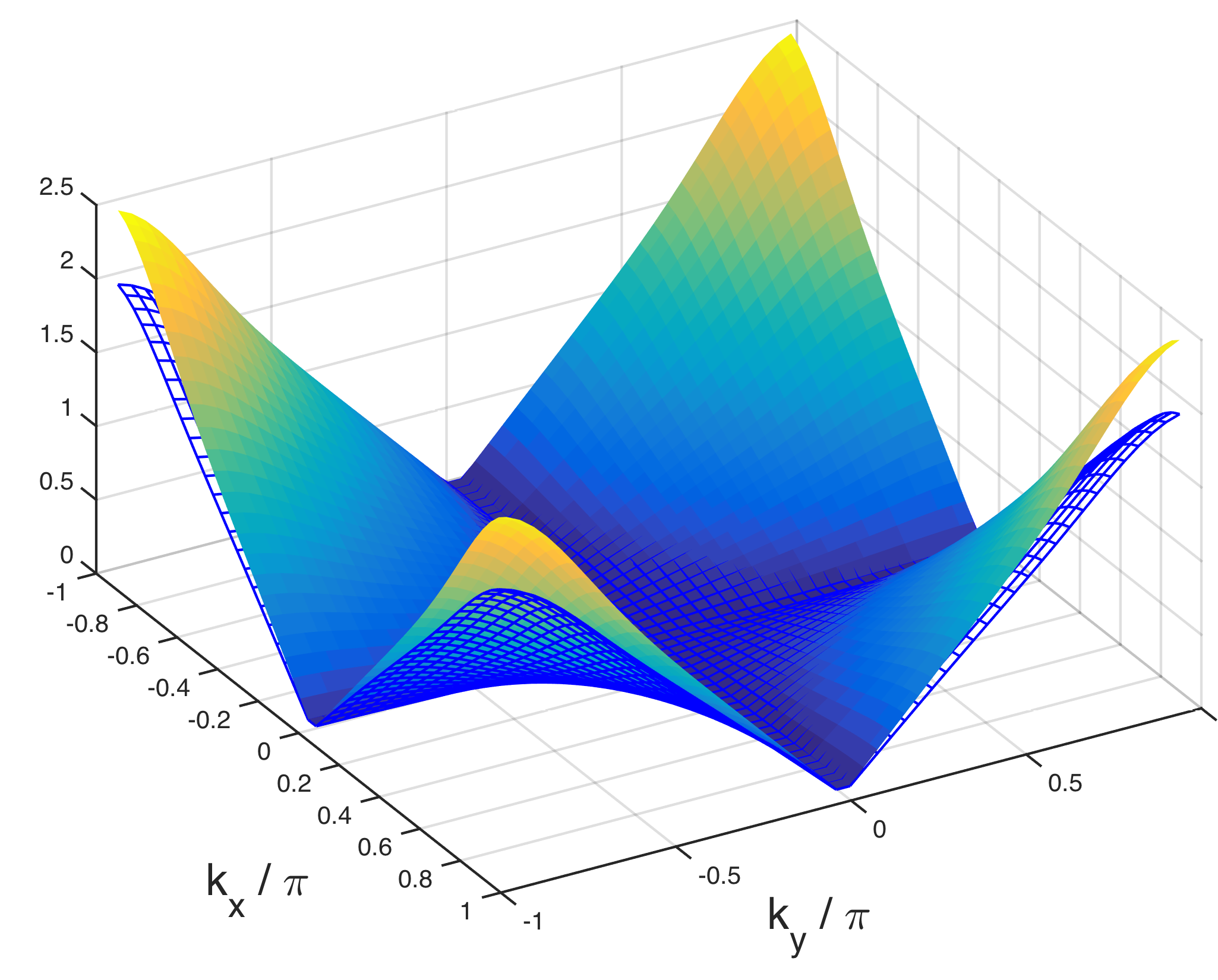}
\caption{\emph{Left panel:}
The relative phase difference between the Fourier transforms of~$h_{ww},g_{ww}$ (smooth surface) approximates the exact phase difference of the two components of~\cref{eq:2dD} or, equivalently, of the negative energy eigenstates of the single-particle Hamiltonian (wireframe mesh).
The colored shading of the coordinate plane indicates the momentum space support of $h_{ww}$ and $g_{ww}$.
It is concentrated around $k_x = k_y = \pm \pi$ and vanishes for $k_x=0$ or $k_y=0$. \\
\emph{Right panel:} The positive energy branch of the original Hamiltonian (wireframe mesh) and of the renormalized Hamiltonian (smooth surface) after 6 layers, where it has approximately reached its fixed point. The eigenmodes of both Hamiltonians are exactly characterized by the relative phase difference displayed in the left panel.}
\label{fig:2dhopping}
\end{figure}
%.............................................................................

%.............................................................................
\begin{figure}
\includegraphics[width=0.45\textwidth]{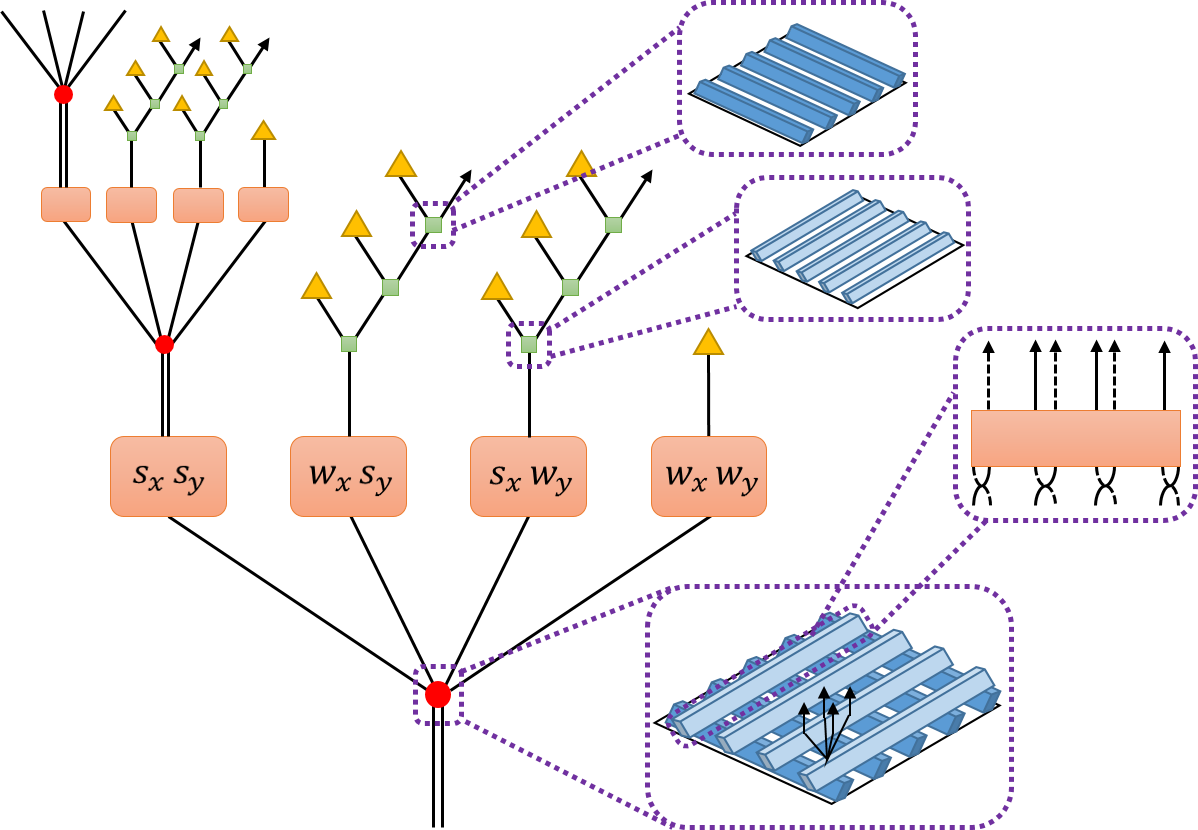}
\caption{\emph{Branching MERA quantum circuit preparing an approximate ground state of the two-dimensional hopping Hamiltonian~\eqref{eq:hamiltonian 2d}.}
The bottom layer consists of applying one-di\-men\-si\-o\-nal wavelet transforms in both the $x$ and $y$ direction of the $45$ degree rotated lattice.
This gives rise to four outputs, as illustrated in detail in \cref{fig:lattice cg}.
The wavelet-wavelet output corresponds to approximate eigenstates, and hence can be projected out after a subsequent Hadamard transform that couples the even and the odd sublattice, indicated by yellow triangles.
The mixed wavelet-scaling outputs are disentangled in one direction and therefore have to be processed further by one-dimensional transformations in the scaling directions, giving rise to two branches that take the form of binary trees as in \cref{fig:1dmera}.
The scaling-scaling output contains no disentangled degrees of freedom and is thus connected to a copy of the circuit on the renormalized lattice.}\label{fig:mera2d}
\end{figure}
%.............................................................................

After second quantization and converting these transformations into a quantum circuit, we obtain an entanglement renormalization quantum circuit of the form sketched in \cref{fig:mera2d,fig:lattice cg}.
This is a particular example of a \emph{branching MERA}, which generalizes the MERA to allow for logarithmic corrections to the area law~\cite{evenbly2014scaling,evenbly2014real,evenbly2014branchmera} and which was explicitly built with Fermi surfaces in mind.
Unlike in the original proposal, our network has \emph{three} branches instead of two.
Indeed, after each layer we are left with four branches, of which only one can be projected into a product state while the remaining three have to be analyzed further.
However, only one of the three branches keeps on branching in the higher levels. The other two are further disentangled by ordinary one-dimensional MERAs as in \cref{fig:1dmera}.
This ensures that the ground state produced by our network satisfies an appropriate area law of the form $S(R) \simeq R \log_2 R$ for the entropy of the reduced density matrix of an $R \times R$ box.
Indeed, let us first recall the estimation of the entanglement entropy in a one-dimensional MERA.
Each layer is a finite-depth quantum circuit that increases the entanglement entropy of a region by at most a constant amount $c_1$, so we obtain $S_{\text{1D}}(R) \leq c_1 + S_{\text{1D}}(R/2) \leq \ldots \leq c_1 \log_2 R$.
For a regular two-dimensional MERA, every layer can increase the entanglement entropy of an $R \times R$ box by $c_2 R$, leading to $S_{\text{2D}}(R) \leq c_2 R + S_{\text{2D}}(R/2) \leq \ldots \leq 2c_2 R$.
Thus, the entanglement entropy in a regular two-dimensional MERA obeys a strict area law.
In contrast, our branching MERA adds in every layer the entanglement contribution of a collection of one-dimensional MERAs in the horizontal and vertical direction. The resulting entanglement entropy is bounded by
\begin{align*}
S(R) &\leq c_2 R + 2 (R/2) S_{\text{1D}}(R/2) + S(R/2)\\
&\leq c_2 R + c_1 R \log_2(R/2) + S(R/2)
% &= (c_2-c_1) R + c_1 R \log_2 R + S(R/2)\\
\leq \ldots \\
&\lesssim 2 c_1 R \log_2 R + (2 c_2 - 4 c_1) R.
\end{align*}
While only an upper bound, this estimate illustrates how a logarithmic violation of the area law can be obtained due to the one-dimensional MERAs in each layer.

%.............................................................................
\begin{figure}
\centering\includegraphics[width=8.5cm]{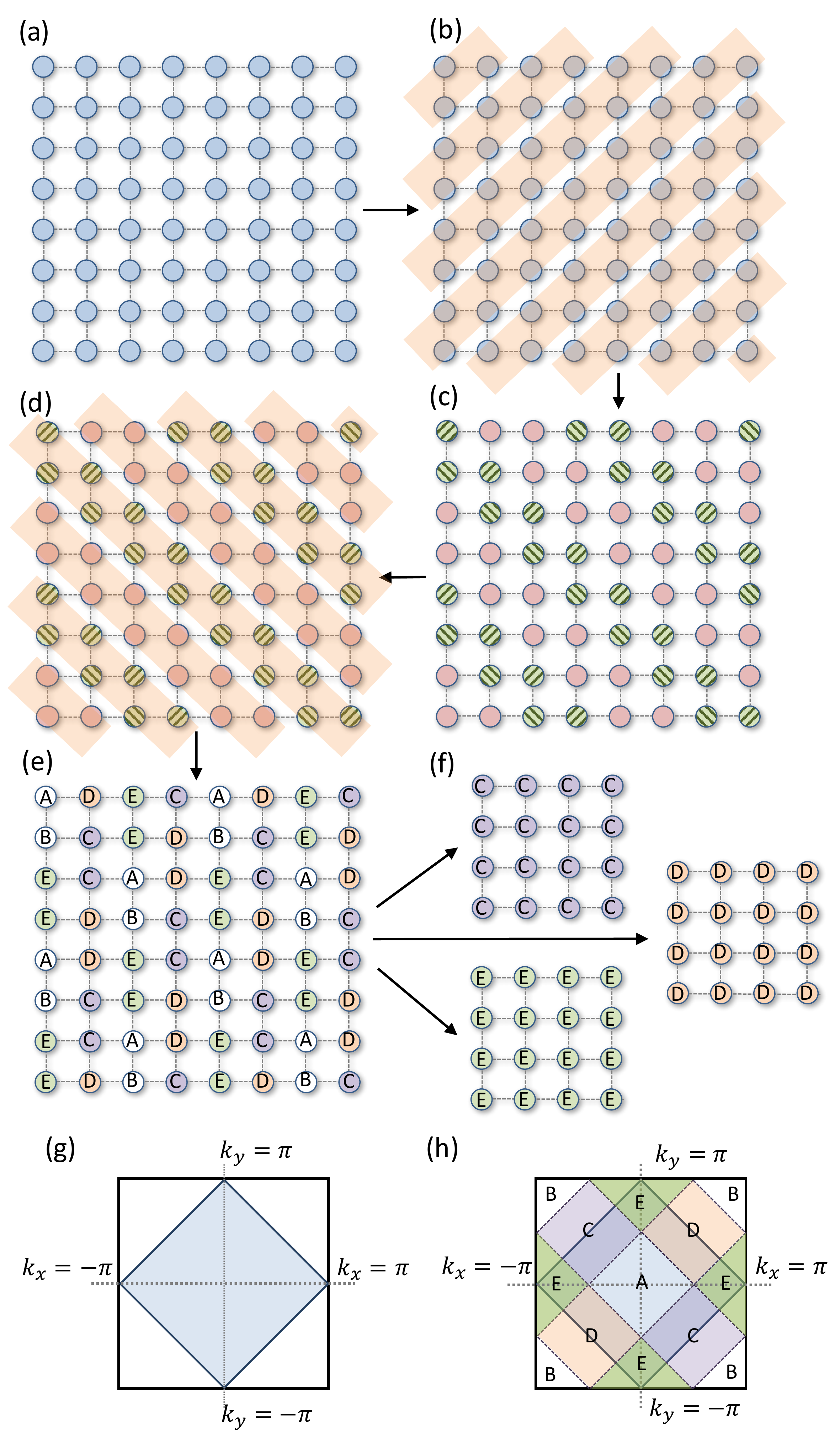}
\caption{\emph{One layer of the branching MERA for two-dimensional free fermions.}
(a) The 2D lattice of fermionic modes.
(b) A series of 1D wavelets transformations, as depicted in Fig. 1, are applied diagonally.
(c) Lattice sites correspond either to scaling (solid) or wavelet (striped) outputs from the preceding transformations.
(d) A series of 1D wavelets transformations are applied across the other diagonal.
(e) The lattice now contains five different types of site (labelled A--E).
(f) Sites labelled A, B are differentiated by the application of Hadamards and are truncated, while new sublattices are separately formed from sites C--E.
(g) The Brillouin zone of the 2D free fermions, where the shaded region denotes the Fermi sea.
(h) The sites in~(e) approximately correspond to the distinct regions of the Brillouin zone as shown.
Sites A and B contain modes in the occupied $\ket1$ or unoccupied $\ket0$ states respectively and may be truncated.
Sublattices from sites C and D consist of products of $1D$ chains that are only correlated in the back-sloping diagonal or forward-sloping diagonal directions respectively.
The sublattice from sites E is self similar to the original lattice.}
\label{fig:lattice cg}
\end{figure}
%.............................................................................

From the perspective of the renormalization group, the scaling-scaling branch gives rise to a renormalized Hamiltonian whose eigenmode structure is exactly the same as that of the original Hamiltonian, so that it can indeed be further processed in a self-similar fashion.
The eigenvalues of the renormalized Hamiltonian converge to a fixed point upon successive coarse-graining (see right panel of \cref{fig:2dhopping}).
The other two branches, resulting from a scaling filter in one direction and a wavelet filter in the other direction, give rise to coarse grained Hamiltonians depicted in \cref{fig:2dham1d}.
The structure of their eigenmodes is purely one-dimensional.
Indeed, for both outputs, one direction is already of wavelet type, so we only have to apply the one-dimensional MERA in the other direction to obtain wavelet outputs at each scale.

%.............................................................................
\begin{figure}
\includegraphics[width=0.23\textwidth]{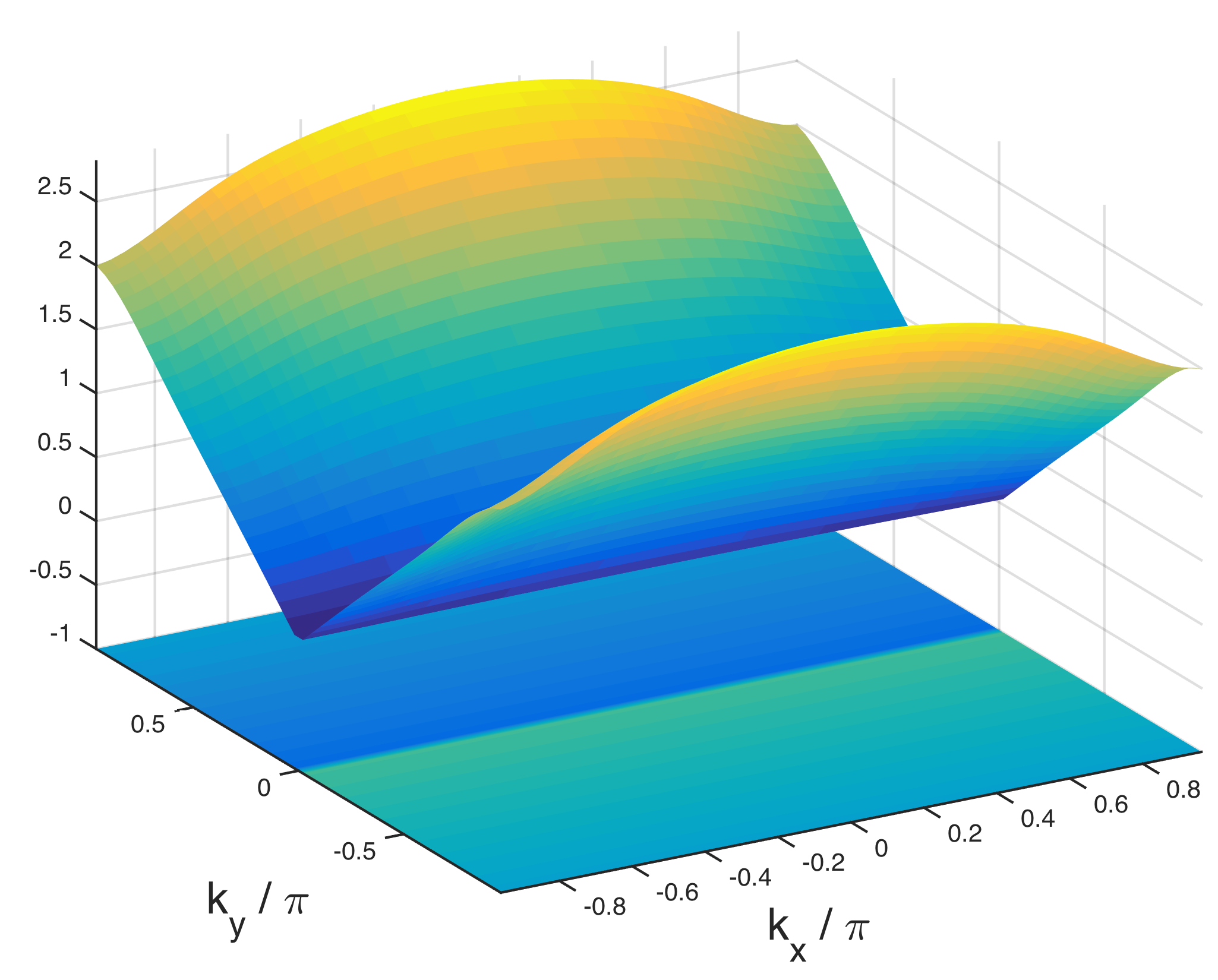}\includegraphics[width=0.23\textwidth]{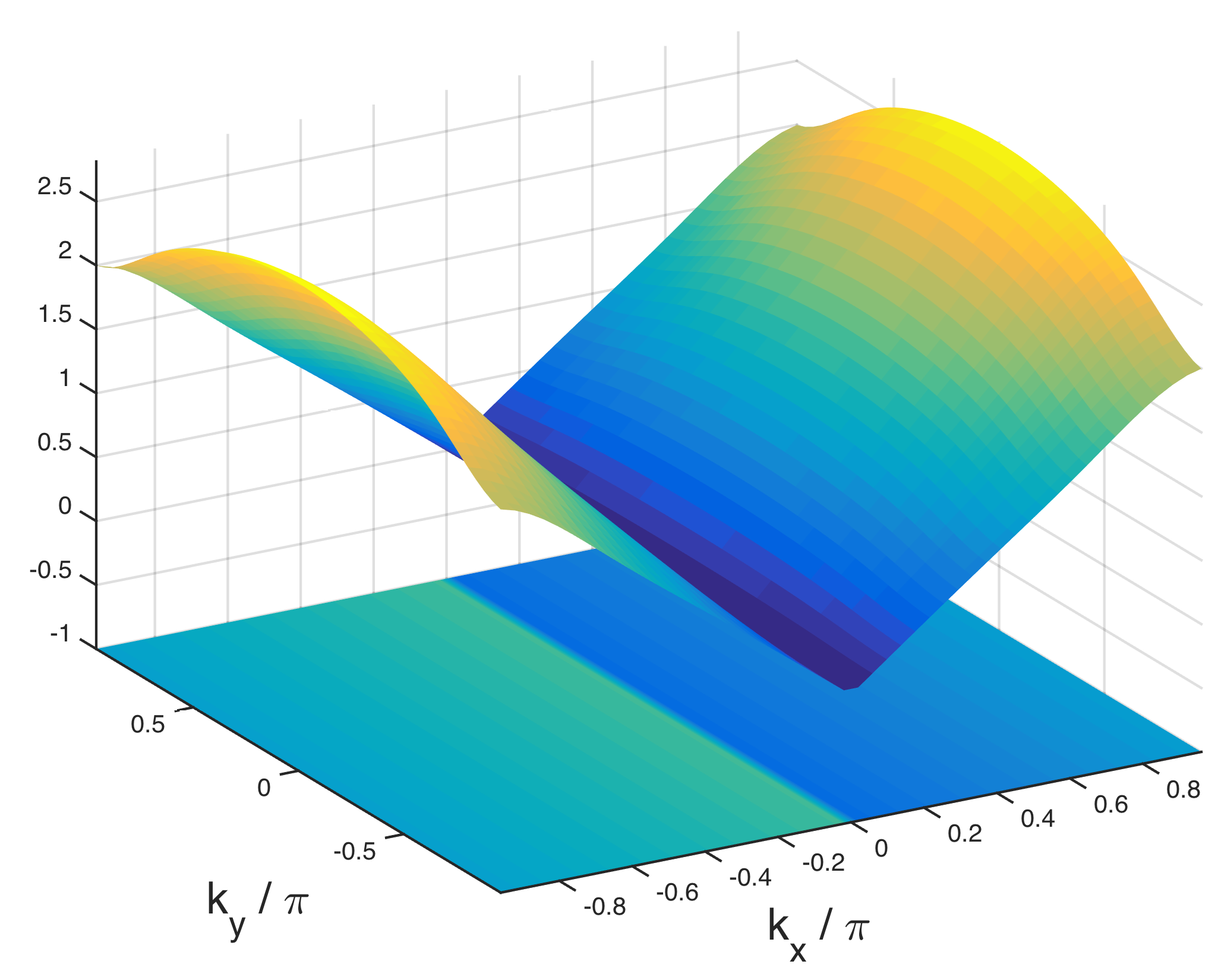}
\caption{\emph{Coarse-grained Hamiltonians for the wavelet-scaling and scaling-wavelet branches.}
We show the positive energy branch (surface) and the relative phase difference between the two components of the positive energy eigenmodes (color coding of the coordinate plane).
Because the eigenmodes are independent of one of the two directions and take the form of the one-dimensional problem studied in \cref{sec:1d}, the ground state can be disentangled by a tensor product of one-dimensional MERAs as in \cref{fig:1dmera}.}\label{fig:2dham1d}
\end{figure}
%.............................................................................

%%%%%%%%%%%%%%%%%%%%%%%%%%%%%%%%%%%%%%%%%%%%%%%%%%%%%%%%%%%%%%%%%%%%%%%%%%%%%%
\section{Rigorous error estimates}\label{sec:circuits and errors}
%%%%%%%%%%%%%%%%%%%%%%%%%%%%%%%%%%%%%%%%%%%%%%%%%%%%%%%%%%%%%%%%%%%%%%%%%%%%%%
In \cref{sec:1d,sec:2d} we constructed entanglement renormalization quantum circuits to approximately prepare the ground state of free fermion Hamiltonians in one and two dimensions, and we gave a heuristic account of the improved quality of our approximations with increased circuit parameters $K$ and $L$.
We will now discuss how this intuition can be turned into rigorous \emph{a priori} error estimates.
For simplicity, we will only formulate our result in the one-dimensional case, but its statement and proof are completely analogous for two dimensions.

Our theorem is stated in terms of correlation functions of fermion creation and annihilation operators.
Given a sequence~$f \in \ell_2(\Z)$, we define the corresponding annihilation and creation operators via $a(f) = \sum_{n\in\Z} f[n] a_n$ and $a^\dagger(f) = \sum_{n\in\Z} f[n]^* a^\dagger_n$.
We are interested in computing \emph{correlation functions} of $2N$ creation and annihilation operators in a many-body state $\Psi$,
\[
G(\{f_i\})_\Psi = \braket{\Psi | a^\dagger(f_1) \ldots a^\dagger(f_N) a(f_{N+1}) \ldots a(f_{2N}) | \Psi}.
\]
Other orderings of operators can be obtained by using the canonical anticommutation relations $\{ a(f), a(g)^\dagger \}=\braket{f|g}$ and $\{a(f),a(g)\}=0$.
The number of creation and annihilation operators must be equal to obtain a nonvanishing result since we are interested in states that are invariant (up to an overall phase) under a global $U(1)$ (particle number) transformation of the form $a_\alpha \mapsto e^{\mathrm{i} \theta} a_\alpha$. For a pure state of a finite size system, this invariance would simply imply that the state has a fixed number of particles. Let $D(\{f_i\})$ denote the maximal support of any linear combination of the observables $f_i$ (e.g., $n$ for an $n$-point function).
We will find that correlation functions of sparse observables are easier to approximate.

Our result is independent of any specific filter construction and only depends on the following parameters.
Let $h_s$ and $g_s$ be two scaling filters of finite length $M$ such that the half delay condition~\eqref{eq:approxhalfdelay} is approximately satisfied:
\begin{equation}\label{eq:approxhalfdelay quantitative}
  \lvert h_s(k) - \rme^{\rmi \frac k2} g_s(k) \rvert \leq \varepsilon < 1.
\end{equation}
We also assume that the filters generate corresponding multiresolution analyses with scaling functions bounded in absolute value by some constant $B\geq1$.
Then we have the following \emph{a priori} error estimate:

\begin{thm}\label{thm:main}
Let $\ket\Omega$ denote the exact ground state of the Hamiltonian~\eqref{eq:hamiltonian 1d} and $\ket{\Omega_\text{MERA}}$ the many-body state prepared by $\mathcal L$ layers of the MERA quantum circuit constructed from two scaling filters as above.
Then we have the following error bound for correlation functions:
For all $f_1,\dots,f_{2N}$ with $\lVert f_i\rVert_2 \leq 1$,
\begin{gather*}
  \bigl\lvert G(\{f_i\})_\Omega - G(\{f_i\})_{\Omega_{\text{MERA}}}\bigr\rvert \\
  \leq 24 \sqrt{N} \sqrt{C 2^{-\mathcal L/2} + 6 \varepsilon \log_2^2(C/\varepsilon)}
\end{gather*}
where the constant $C$ is given by $2^{3/2} \sqrt{D(\{f_i\})} \, B M$.
\end{thm}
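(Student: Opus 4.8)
The plan is to reduce \cref{thm:main} to a purely single-particle estimate and then to exploit the multiresolution structure of the two wavelet transforms. Both many-body states are Gaussian and $U(1)$-invariant: $\ket\Omega$ because $H$ in \eqref{eq:hamiltonian 1d} is quadratic, and $\ket{\Omega_{\text{MERA}}}$ because it is the second quantization of a product of single-particle unitaries (the wavelet layers and the Hadamards) applied to a product of empty/occupied modes. Hence each is determined by its one-body density matrix, and these are orthogonal projections: $P:=G_\Omega$ onto the Fermi sea (the negative-energy subspace of $h^{(1)}$) and $Q:=G_{\Omega_{\text{MERA}}}$ onto the span of the occupied MERA modes. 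The problem then splits into (i) bounding the correlation-function difference by a single-particle ``disagreement'' between $P$ and $Q$ as seen by the observables, and (ii) estimating that disagreement from $\varepsilon$, $M$, $B$ and $\mathcal L$.

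\emph{Step 1 (Wick reduction).} By Wick's theorem $G(\{f_i\})_\Psi = \det[\langle f_i, G_\Psi f_{N+j}\rangle]_{i,j=1}^N$. Writing this matrix as $F_1^* G_\Psi F_2$ with $F_1,F_2:\C^N\to\ell_2(\Z)$ the (norm $\le\sqrt N$) maps sending basis vectors to the $f_i$, I would expand $\det[\langle f_i, Pf_{N+j}\rangle]-\det[\langle f_i, Qf_{N+j}\rangle]$ by column-multilinearity, bound each resulting term by the Cauchy--Binet/Gram--Hadamard inequality $\lvert\det(F_1^*C)\rvert\le\prod_j\lVert Ce_j\rVert_2$ (the columns of $C$ being either $Pf_{N+j}$, $Qf_{N+j}$, or the error $(P-Q)f_{N+j}$), organize the sum over subsets via elementary symmetric functions of the column errors $\lVert(P-Q)f_{N+j}\rVert_2$, and use $\lVert f_i\rVert_2\le1$, Cauchy--Schwarz and $e^x-1\le xe^x$. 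With the constants tracked this yields a bound of the shape
\[
  \lvert G(\{f_i\})_\Omega - G(\{f_i\})_{\Omega_{\text{MERA}}}\rvert \;\lesssim\; \sqrt N\,\Bigl(\max_i\langle f_i,\lvert P-Q\rvert f_i\rangle\Bigr)^{1/2},
\]
the square root reflecting that the natural single-particle quantity is the quadratic form $\langle f,\lvert P-Q\rvert f\rangle\ge\lVert(P-Q)f\rVert_2^2$; it is the antisymmetry of the determinant that keeps the prefactor at $\sqrt N$ rather than a larger power, and this is the only delicate point of the step.

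\emph{Step 2 (the single-particle estimate).} It remains to show, for $f$ with $\lVert f\rVert_2\le1$ supported in a set of size $D:=D(\{f_i\})$, that $\langle f,\lvert P-Q\rvert f\rangle \le C\,2^{-\mathcal L/2}+6\varepsilon\log_2^2(C/\varepsilon)$ with $C=2^{3/2}\sqrt D\,BM$. I would decompose $f$ in the orthonormal basis of the $\mathcal L$-layer transform: its components in the level-$\ell$ wavelet subspaces $\mathcal W_\ell$ ($\ell=1,\dots,\mathcal L$) and in the coarsest scaling subspace $\mathcal S_{\mathcal L}$. On each $\mathcal W_\ell$ the circuit occupies exactly the modes meant to sit inside the Fermi sea; propagating the approximate half-delay~\eqref{eq:approxhalfdelay quantitative} through the $\ell$ nested transforms (the content of \cref{app:proof}) should give $\lVert(P-Q)\Pi_{\mathcal W_\ell}\rVert=O(\ell\varepsilon)$. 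On $\mathcal S_{\mathcal L}$ the MERA has not yet reproduced the Fermi sea, but a localized $f$ has little weight there: since $\lVert\phi_{\mathcal L,n}\rVert_\infty\le B\,2^{-\mathcal L/2}$, only $O(M)$ scaling functions meet $\supp f$, and $\lVert f\rVert_1\le\sqrt D\lVert f\rVert_2$, so $\lVert\Pi_{\mathcal S_{\mathcal L}}f\rVert_2=O(\sqrt{DM}\,B\,2^{-\mathcal L/2})$; moreover the wavelet content of $f$ is concentrated on the $O(\log D)=O(\log_2(C/\varepsilon))$ finest scales. Combining these, the cross term between the coarsest-scale weight of $f$ and the (not fully disentangled) detail subspaces produces the $C\,2^{-\mathcal L/2}$ contribution, and summing the compounding leakage $(\ell\varepsilon)$ over the $\lesssim\log_2(C/\varepsilon)$ relevant scales, weighted by the unit-summing spectral content of $f$ there, produces the $6\varepsilon\log_2^2(C/\varepsilon)$ contribution.

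\emph{Main obstacle.} Essentially all the work is in Step 2, and more precisely in the uniform-in-$\ell$ leakage bound $\lVert(P-Q)\Pi_{\mathcal W_\ell}\rVert=O(\ell\varepsilon)$: the half-delay error must be propagated through $\ell$ nested transforms and shown to accumulate only linearly; the coarse-grained Hamiltonian is \emph{not} exactly block-diagonal (the momentum supports of the wavelet and scaling filters overlap), so the occupied modes are not exact eigenmodes and one must argue directly about their overlap with the Fermi-sea projector rather than invoking a spectral decomposition; and the region size $D$ must be carried through the multiresolution combinatorics so that it enters only as $\sqrt D$. Establishing the all-levels form of~\eqref{eq:approxhalfdelay quantitative} in \cref{app:proof} and converting it into this operator-norm statement is the crux; once it is in hand the remaining estimates are bookkeeping, and the two-dimensional claim follows by running the same argument on the tensor-product transform~\eqref{eq:outputWW}.
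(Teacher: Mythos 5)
Your Step 2 is, in substance, the paper's own argument: the linear-in-$\ell$ accumulation of the half-delay error through the nested transforms is \cref{lem:phase error}, the exponential smallness of the coarse-scale content of a localized observable (via boundedness and compact support of the scaling function, $\lVert f\rVert_1\le\sqrt{D}\lVert f\rVert_2$) is \cref{lem:scaling decay}, and your ``sum the leakage $\ell\varepsilon$ against the decaying scale-$\ell$ weight of $f$, with crossover at $\ell\sim 2\log_2(C/\varepsilon)$'' is a scale-by-scale rephrasing of the optimization over the intermediate truncation level $\mathcal L'$ in \cref{lem:tradeoff}; you also correctly identify that one must compare the prepared modes directly to the Fermi-sea projector rather than appeal to a spectral decomposition of the non-block-diagonal renormalized Hamiltonian. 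Where you genuinely diverge is Step 1. The paper does \emph{not} expand the determinant: it restricts both generalized free states to the span $F$ of the observables and invokes the Powers--St{\o}rmer estimate \cite{powers1970free} to bound $\lVert \Omega|_F-\Omega_{\text{MERA}}|_F\rVert_1\le 12\sqrt{\lVert\omega|_F-\omega_{\text{MERA}}|_F\rVert_1}\le 12\sqrt{2N\lVert\omega|_F-\omega_{\text{MERA}}|_F\rVert}$; this is the sole origin of the overall square root and of the prefactor $24\sqrt N$ --- it has nothing to do with the antisymmetry of the determinant, and your claim that the multilinear expansion ``with constants tracked'' yields $\sqrt N\,(\max_i\langle f_i,\lvert P-Q\rvert f_i\rangle)^{1/2}$ is not something your sketched manipulations deliver (nor is the operator absolute value $\lvert P-Q\rvert$ the quantity your own expansion controls; the Gram entries only see $p_F(P-Q)p_F$). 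What the telescoping/Gram--Hadamard route actually gives, done carefully, is $(1+\eta)^N-1\le N\eta\,e^{N\eta}$ with $\eta=\lVert p_F(P-Q)p_F\rVert\le 2\delta$, i.e.\ a bound \emph{linear} in the symbol error with prefactor $N$. This is not a gap in the end: since the claimed inequality is vacuous unless $N\delta\lesssim 1$, the linear bound $O(N\delta)$ implies (and in that regime improves on) the stated $24\sqrt{N\delta}$, so your route is a legitimate, more elementary alternative that avoids Powers--St{\o}rmer entirely --- but you should state the bound you actually obtain rather than reverse-engineering the theorem's square-root form from the wrong mechanism.
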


\Cref{thm:main} shows that correlation functions can be approximated to arbitrarily high fidelity for a MERA constructed from suitable scaling filters.
As discussed in \cref{sec:1d}, Selesnick's algorithm gives rise to such filters, parametrized by two integers $K$ and $L$~\cite{selesnick2002design}.
Their length is $M=K+L$, and we numerically find that $B$ remains bounded, while $\varepsilon$ decreases exponentially as we increase $K$ and $L$ (see \cref{fig:selesnick filter error}).
Thus the error bound in \cref{thm:main} is likewise exponentially small if the number of layers $\mathcal L$ is sufficiently large.
We illustrate this in \cref{sec:numerics} below, where we numerically approximate the energy density and more general two-point functions.

\begin{figure}
  \includegraphics[width=0.5\textwidth]{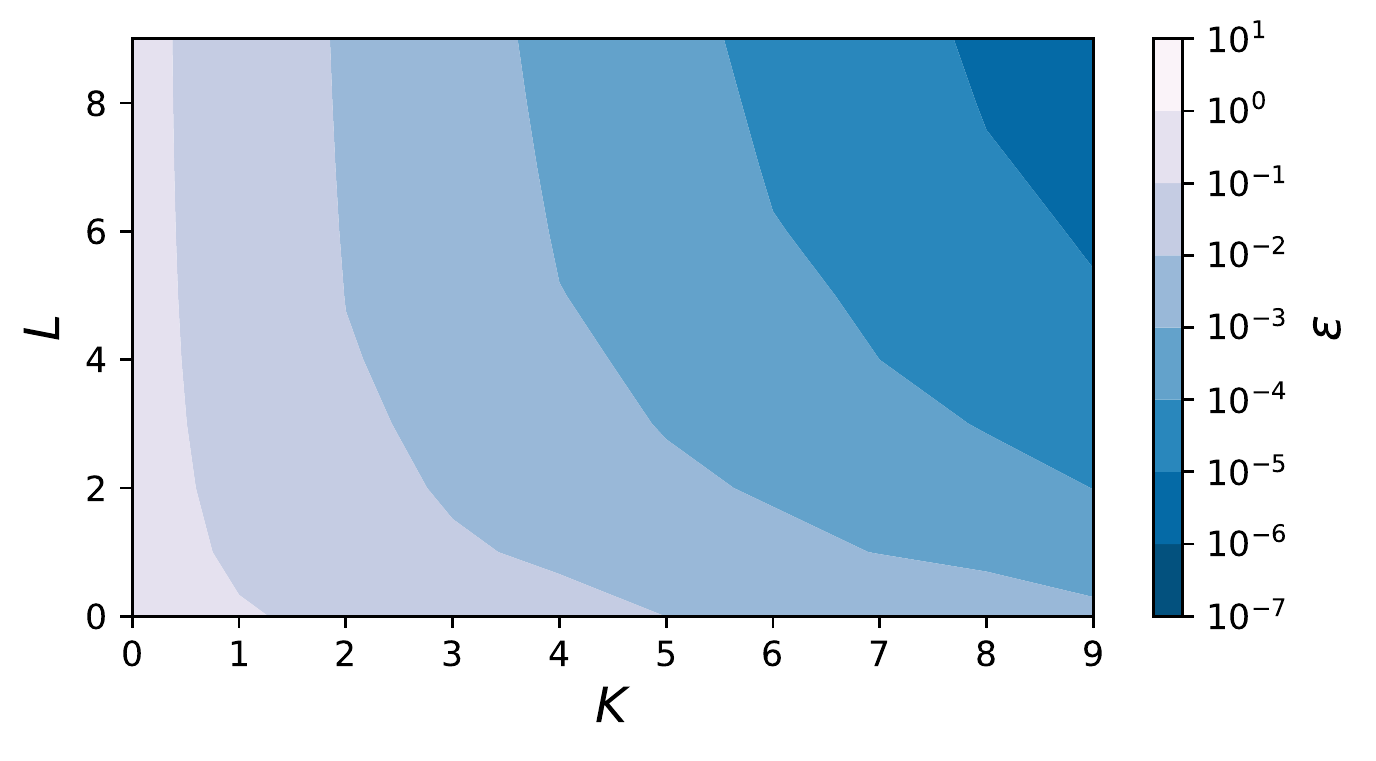}
  \caption{\emph{Illustration of the error term~$\varepsilon$ in \cref{thm:main}.} The error~$\varepsilon$ decreases exponentially for Selesnick's construction as the parameters $K$ and $L$ are increased.}\label{fig:selesnick filter error}
\end{figure}

It is instructive to consider a few features of \cref{thm:main}. Suppose $f_1[n] = \delta_{n,x}$ and $f_2[n]=\delta_{n,y}$. Then $G(\{f_i\})$ is simply the \emph{two-point function} $C(x,y)=\braket{a^\dagger_x a_y}$, which in the true ground state decays with $\lvert y-x\rvert $ as a power law, $C \sim \lvert y-x\rvert^{-1/2}$. Yet, \cref{thm:main} gives a bound that is independent of the separation $\lvert y-x\rvert$. This might seem puzzling since for a finite depth $\mathcal{L}$, all correlations between operators separated by more than $2^{\mathcal{L}} M$ vanish. However, at a distance of $\lvert y-x\rvert = M 2^{\mathcal{L}}$, the two-point function is of order $C \sim M^{-1/2} 2^{-\mathcal{L}/2}$ which is consistent with \cref{thm:main}: dropping the second term in the square root, we still have $M^{-1/2} 2^{-\mathcal{L}/2} \leq M^{1/2} 2^{-\mathcal{L}/4}$.

More generally, the two terms in the square root in \cref{thm:main} have different physical interpretations.
The first is associated with the convergence of the renormalization group transformation, while the second is associated with the goodness of approximation of the phase relation.
Indeed, \cref{eq:approxhalfdelay quantitative} requires that the phase relation~\eqref{eq:selesnickwavelet} is approximately correct or, when this is not the case for some~$k$, that both $h_s(k)$ and $g_s(k)$ are small in magnitude (cf.\ \cref{sec:1d}).

Our proof of \cref{thm:main} makes this intuition precise.
We show that \cref{eq:approxhalfdelay quantitative} guarantees that the single-particle modes prepared by the MERA are approximate eigenmodes, and the boundedness of the scaling function ensures that the truncation error decreases exponentially with the number of layers of the tensor network.
Together, this implies that the two-point correlation functions of the states $\ket\Omega$ and $\ket{\Omega_{\text{MERA}}}$ are approximately equal.
We then use a robust version of Wick's theorem~\cite{powers1970free} to show that higher correlation functions can likewise be approximated up to small error.
We refer to \cref{app:proof} for a rigorous mathematical proof.

It is remarkable that the error converges as $\mathcal L\to\infty$: even though correlation functions now depend on an infinite number of ``non-ideal" (finite $\varepsilon$) layers, the total error is bounded. This is a consequence of the hierarchical renormalization group structure of the network combined with the boundedness of the scaling functions.

Note that \cref{thm:main} does \emph{not} provide an error estimate on the fidelity between the true ground state and the MERA state for an infinite system.
Indeed, these two states are expected to necessarily be orthogonal in the thermodynamic limit, since any finite error per unit volume will result in zero overlap as the system size is taken to infinity.
Nevertheless, \cref{thm:main} proves that our construction can yield correlation functions that approximate those of the true ground state to arbitrary accuracy.
Therefore, all intensive (not scaling with system size) physical properties that can be inferred from these are likewise well-reproduced.
Our results can thus be seen as another instance where we can rigorously construct tensor network states for critical systems or for quantum field theories if we focus on correlation functions, a point first raised in~\cite{Koenig_2017,Koenig_2016}.

On a finite ring of size $V$, the one-dimensional model has an energy gap $\Delta \sim 1/V$.
In such a situation, the infinite system $\mathcal{L} \rightarrow \infty$ circuit must be modified to fit into the finite-size system.
We expect that there exists an analogue of \cref{thm:main} that guarantees correlation functions are well-approximated for sufficiently small $\varepsilon$.
Moreover, in this finite-size setting and with sufficiently small $\varepsilon$, the state $\ket{\Omega_{\text{MERA}}}$ can have high overlap with $\ket{\Omega}$. Indeed, if $P_\Omega = \ket{\Omega} \bra{\Omega}$ is the ground state projector and $E_\Omega$ is the ground state energy, then we have $ \Delta (1-P_\Omega) \leq (H - E_\Omega )$ and hence
\begin{equation*}
1 - |\langle \Omega | \Omega_{\text{MERA}} \rangle|^2 \leq \frac{1}{\Delta} (\langle \Omega_{\text{MERA}} | H | \Omega_{\text{MERA}} \rangle - E_\Omega ).
\end{equation*}
Thus if the energies of $\ket{\Omega}$ and $\ket{\Omega_{\text{MERA}}}$ are within $1/\text{poly($V$)}$ of each other, then the overlap $|\langle \Omega | \Omega_{\text{MERA}} \rangle|^2$ is within $1/\text{poly($V$)}$ of one. If, as suggested by our numerics, the error $\varepsilon$ achieved by Selesnick's wavelets say, goes down exponentially with $\min(K,L)$, then one would have high overlap between a MERA state and the true ground state using a bond dimension scaling only polynomially in~$V$.

%%%%%%%%%%%%%%%%%%%%%%%%%%%%%%%%%%%%%%%%%%%%%%%%%%%%%%%%%%%%%%%%%%%%%%%%%%%%%%
% \section{Computation of physical properties}\label{sec:numerics}
\section{Numerical results}\label{sec:numerics}
%%%%%%%%%%%%%%%%%%%%%%%%%%%%%%%%%%%%%%%%%%%%%%%%%%%%%%%%%%%%%%%%%%%%%%%%%%%%%%
Our construction can be used to effectively calculate physical properties in real space~\footnote{Software available at \url{https://github.com/catch22/pyfermions}.}.
For example, consider the \emph{energy density} of the approximate ground state.
Its value for the MERA quantum circuit for the infinite one-dimensional discrete line, truncated at depth~$\mathcal L$, is given by $\sum_{\ell=1}^\mathcal L 2^{-(\ell+1)} e_{(\ell)}$, with $e_{(\ell)} = -2 \Re \sum_n \phi_{(\ell)}[n] \phi^*_{(\ell)}[n+1]$ the single-particle energy of a mode~$\phi_{(\ell)}$ obtained from the $\ell$-th layer.
The scaling factor comes from the fact that at the $\ell$-th level of the MERA, the density of degrees of freedom is~$2^{-\ell}$, half of which are filled.
This can be easily be evaluated numerically and displays convergence to the true value~$-2/\pi$, as illustrated in the left panel of \cref{fig:energy}.
The numerical results are consistent with a power law convergence in the effective bond dimension $\chi=2^{K+L}$, in agreement with our discussion below \cref{thm:main}.
We find that our analytical construction systematically improves over the one from~\cite{evenbly2016entanglement} but its energy density estimate is outperformed by the variationally optimized non-Gaussian MERA from~\cite{evenbly2013quantum}.

%.............................................................................
\begin{figure}
  \includegraphics[width=0.23\textwidth]{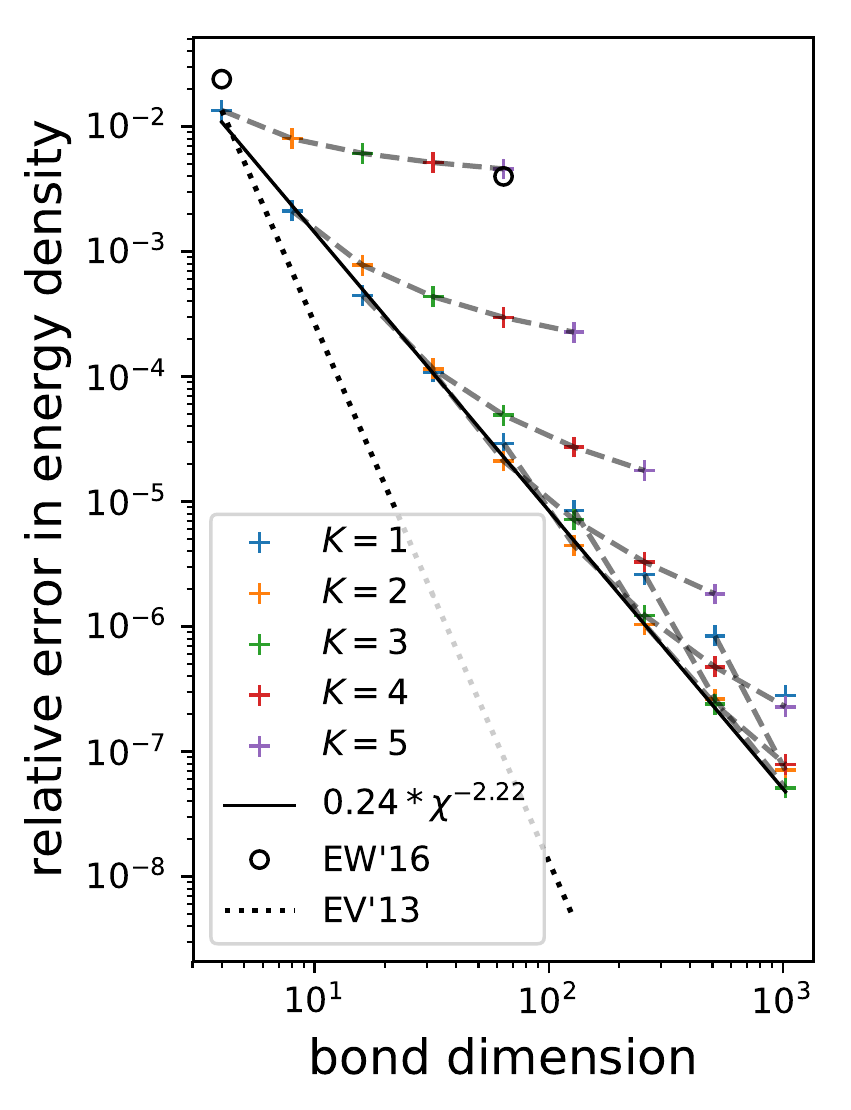}
  \includegraphics[width=0.23\textwidth]{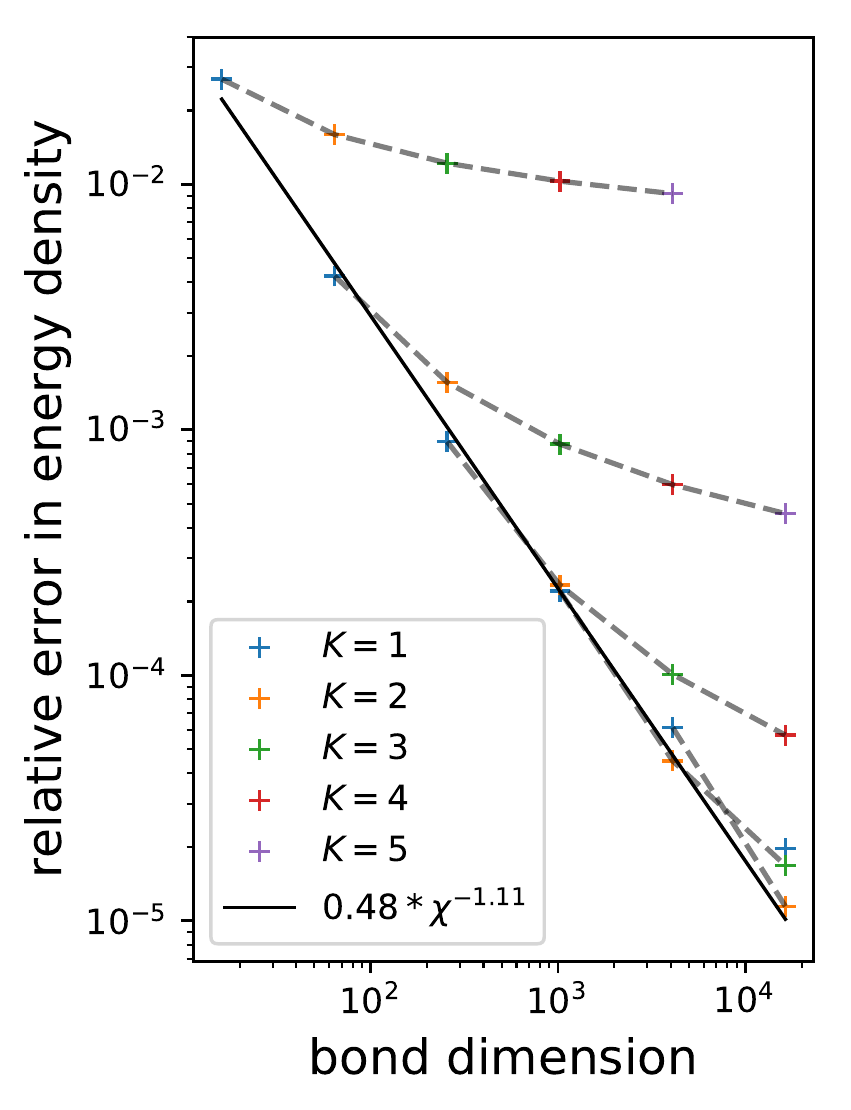}
  \caption{\emph{Relative error in the energy density} for the approximate ground states prepared by our quantum circuits in one dimension (left) and in two dimensions (right), for various values of the wavelet parameters~$K,L$ (dashed lines indicate same~$L$). For comparison we also display the relative error for the analytical construction from~\cite{evenbly2016entanglement} and for the numerically optimized non-Gaussian MERA from~\cite{evenbly2013quantum}.}\label{fig:energy}
\end{figure}
%.............................................................................

A similar procedure works for the two-di\-men\-si\-o\-nal square lattice.
The energy density is now given by $\sum_{\ell_x=1}^{\mathcal L_x} \sum_{\ell_y=1}^{\mathcal L_y} 2^{-(\ell_x+\ell_y+1)} e_{(\ell_x,\ell_y)}$, where $e_{(\ell_x,\ell_y)}$ denotes the single-particle energy of a mode obtained from levels~$\ell_x,\ell_y$ of the quantum circuit, which we recall denote the number of renormalization steps in the $x$ and $y$ direction, respectively.
In other words, $\min(\ell_x,\ell_y)$ is the level at which we switch to a one-dimensional branch (cf.~\cref{fig:mera2d}).
It is useful to note that, since the two wavelet transforms involved are separable, the modes obtained on each sublattice are tensor products of one-dimensional modes, coupled only by the final Hadamard transforms.
This allows us to carry out all computations in the one-dimensional setting.
Our numerical results are shown in the right panel of \cref{fig:energy} and again show power law convergence in the effective bond dimension to the true value~$-8/\pi^2$.

As a last example, we consider a general \emph{two-point function} $C(x,y)=\braket{a_x^\dagger a_y}$.
While the true ground state is translation-invariant, $C(x,y)\neq C(y-x)$ for the approximate ground state prepared by the quantum circuit, since the latter is not perfectly invariant under arbitrary lattice translations.
For simplicity, we only discuss the one-dimensional case.
As above, let~$\phi_{(\ell)}$ denote a single-particle mode obtained from the $\ell$-th level of the MERA quantum circuit.
Then we have $C(x,y)=\sum_{\ell=1}^\mathcal L \sum_{z\in\Z} \phi_{(\ell)}[x-2^{\ell+1} z] \phi_{(\ell)}^*[y-2^{\ell+1} z]$.
The inner sum corresponds to the different modes obtained from the $\ell$-th level, obtained as translates of $\phi_{(\ell)}$; we note that only finitely many translates yield a nonzero summand since the $\phi_{(\ell)}$ are finitely supported.
The result is shown in \cref{fig:twopoint}.
Again we find convergence to the exact solution $C(y-x)=\sin(\pi(y-x)/2)/(\pi(y-x))$.
In particular, the two-point function becomes more and more translation-invariant with increased $K,L$.

% .............................................................................
\begin{figure}
\includegraphics[width=0.23\textwidth]{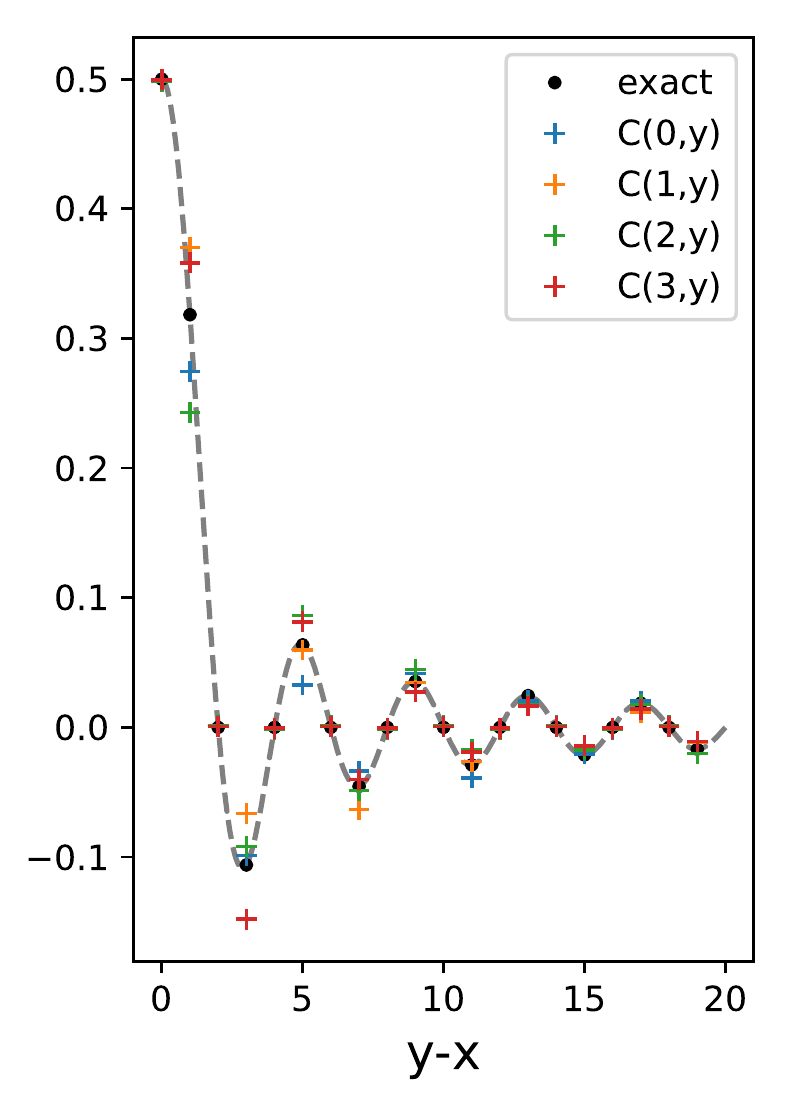}
\includegraphics[width=0.23\textwidth]{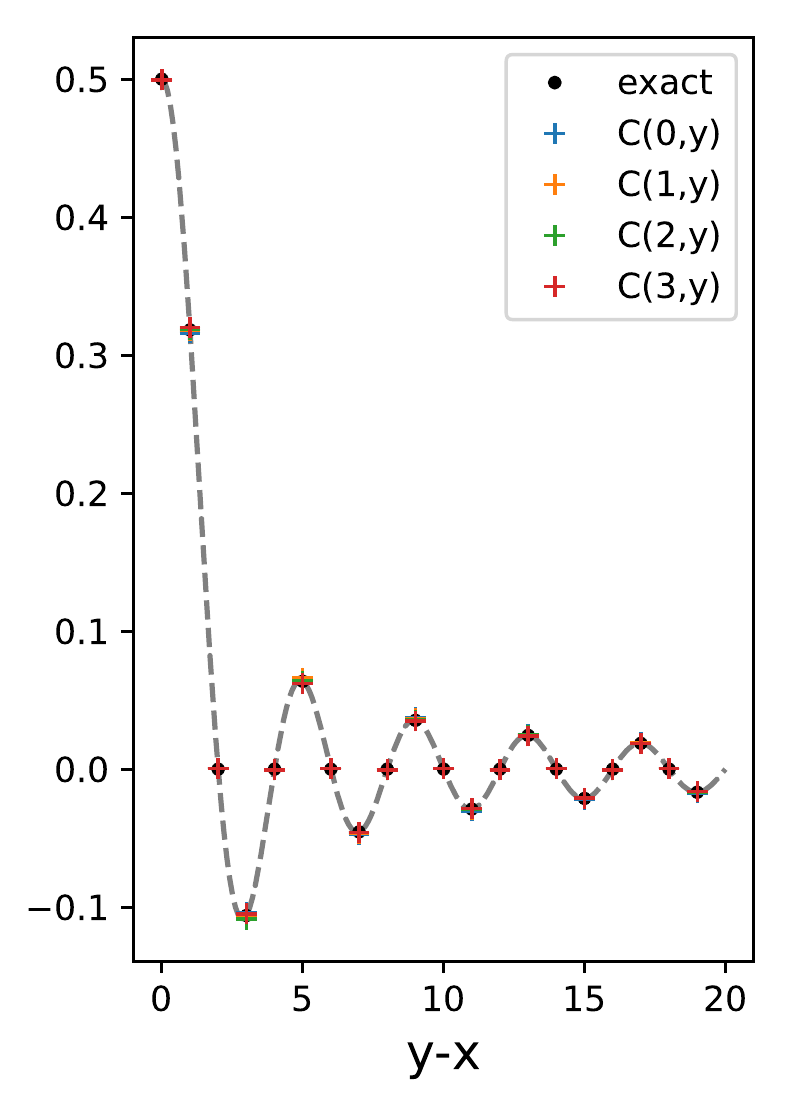}
\caption{\emph{Two-point function~$C(x,y)$} of the approximate ground states in one-dimension for wavelet parameters~$K=L=1$ (left) and $K=L=3$ (right).}\label{fig:twopoint}
\end{figure}
% .............................................................................

%%%%%%%%%%%%%%%%%%%%%%%%%%%%%%%%%%%%%%%%%%%%%%%%%%%%%%%%%%%%%%%%%%%%%%%%%%%%%%
\section{Conclusions}\label{sec:conclusions}
%%%%%%%%%%%%%%%%%%%%%%%%%%%%%%%%%%%%%%%%%%%%%%%%%%%%%%%%%%%%%%%%%%%%%%%%%%%%%%
In this work we showed how wavelet theory can be used to rigorously construct quantum circuits that approximate metallic states of non-interacting fermions.
Working directly in the thermodynamic limit, we showed that arbitrary correlation functions of fermion creation and annihilation operators can be approximated to high accuracy for appropriate choice of circuit parameters.
In a finite-size system, we argued based on our numerics that a tensor network with bond dimension scaling only polynomially with system size can achieve unit overlap with the true ground state in the large system size limit.
Although such a bond dimension is high from the point of view of numerical calculations using a classical computer, from an information-theoretic point of view it represents an astounding compression of the quantum state.
At no point did we use a variational optimization to determine the circuit parameters, and the circuits we construct have a plain physical meaning.
The essential physics arose from the structure of negative and positive energy eigenspaces and was encapsulated in a half-shift delay between pairs of wavelet filters.
The design of such pairs of wavelets had already been carried out in the signal processing community.

The constructions reported here are closely related to a forthcoming work by three of the authors which uses wavelet technology to approximate correlation functions in a continuum quantum field theory, namely the free Dirac field in 1+1 dimensions.
As in the case of the thermodynamic limit of the lattice system, the correct notion of approximation turns out to be approximation of correlation functions instead of approximation of states.
Using the free Dirac field construction, it is also possible to construct MERA circuits which approximate the correlation functions of interacting Wess-Zumino-Witten field theories in 1+1 dimensions.

There are many immediate directions for further development.
It is of considerable interest to adapt existing wavelets or design new wavelets that can capture curved Fermi surfaces; then we would truly be able to describe a general class of metallic states in two or more dimensions.
This would, for example, enable us to address different chemical potentials in the square lattice model.
It is also interesting to adapt our wavelet approach to describe Dirac points in two or more dimensions; the basic approach used here is clearly sound, but there is an interesting wavelet design problem to capture the physics of the filled Dirac sea.
Another very interesting direction is interacting fermions.
For example, similar in spirit to Slater-Jastrow wavefunctions, our non-interacting wavelet MERA construction might be used as the starting point for a variational class of wavefunctions for interacting metals.

\begin{acknowledgments}
We acknowledge inspiring discussions with Frank Verstraete, Karel van Acoleyen and Matthias Bal.
JH acknowledges funding from the European Commission (EC) via ERC grant ERQUAF (715861).
BGS is supported by the Simons Foundation as part of the It From Qubit Collaboration; through a Simons Investigator Award to Senthil Todadri; and by MURI grant W911NF-14-1-0003 from ARO.
MW gratefully acknowledges support from the Simons Foundation and AFOSR grant FA9550-16-1-0082.
JC is supported by the Fannie and John Hertz Foundation and the Stanford Graduate Fellowship program.
VBS was supported by the EC through grants QUTE and SIQS.
\end{acknowledgments}

\bibliography{library}

\cleardoublepage\onecolumngrid\appendix

%%%%%%%%%%%%%%%%%%%%%%%%%%%%%%%%%%%%%%%%%%%%%%%%%%%%%%%%%%%%%%%%%%%%%%%%%%%%%
\section{Proof of Theorem~\ref{thm:main}}\label{app:proof}
%%%%%%%%%%%%%%%%%%%%%%%%%%%%%%%%%%%%%%%%%%%%%%%%%%%%%%%%%%%%%%%%%%%%%%%%%%%%%

We start by describing the setup in precise mathematical terms.
Any pure gauge-invariant generalized free state $\Psi$ can be described by an operator $\psi\geq0$, known as the \emph{symbol}, such that the correlation functions are given by
\begin{equation}\label{eq:G from symbol}
  G(\{f_i\})_\Psi = \det\left[\braket{f_i | \psi | f_{N+1-j}}\right]_{i,j=1}^N.
\end{equation}
For pure states, $\psi$ is a projection which we can think of as the projection onto the Fermi sea.
To connect with physics notation note that ``gauge-invariant" means effectively that the density matrix~$\Psi$ is invariant under a global $U(1)$ (particle number) transformation of the form $a_\alpha \mapsto e^{\mathrm{i} \theta} a_\alpha$.

Both the true ground state and the state prepared by the MERA are pure gauge-invariant generalized free states.
Following the discussion in \cref{sec:1d}, their symbols can be described as follows.
We denote by~$v\colon\ell^2(\Z)\ot\C^2\to\ell^2(\Z)$ the unitary corresponding to the transformation~$(b_1,b_2) \mapsto a$ and by~$m(\theta_w)\colon\ell^2(\Z)\to\ell^2(\Z)$ the Fourier multiplier by~$\theta_w(k)=-\rmi\sign(k)e^{\rmi\frac k2}$, so that the operator~\eqref{eq:1dD} can be written as $d = \left[\begin{smallmatrix}I & 0\\0 & m(\theta_w)\end{smallmatrix}\right]$. Recall that $u=d (I \ot h_2)$, with $h_2$ the Hadamard gate.
Then the symbol of the true ground state~$\ket\Omega$ is given by
\begin{equation}\label{eq:symbol true}
\omega
= v u \begin{bmatrix}I & 0 \\ 0 & 0\end{bmatrix} u^\dagger v^\dagger
= v \begin{bmatrix}I & 0\\0 & m(\theta_w)\end{bmatrix} \left( I \ot \ket+\bra+ \right) \begin{bmatrix}I & 0\\0 & m(\theta_w)^\dagger\end{bmatrix} v^\dagger,
\end{equation}
where $\ket+ = (\ket0+\ket1)/\sqrt2$.
Next, recall that we are given two pairs of filters, $h_s,h_w$ and $g_s,g_w$.
We denote the corresponding wavelet transforms, truncated at level~$\mathcal L$, by
\[ W_h^{(\mathcal L)}, W_g^{(\mathcal L)} \colon \ell^2(\Z) \to \ell^2(\Z) \ot \C^{\mathcal L+1}, \]
where the first $\mathcal L$ coordinates of the output correspond to the wavelet outputs and the last one to the remaining scaling output (see, e.g., \cite{mallat2008wavelet} for an introduction to wavelet theory).
Let us denote by~$p_w^{(\mathcal L)},p_s^{(\mathcal L)}$ the projection onto the wavelet outputs and the scaling output, respectively.
%Note that $p_s^{(\mathcal L)}+p_w^{(\mathcal L)}=I$.
Then the many-body ground state~$\ket\Omega_{\text{MERA}}$ prepared by the MERA quantum circuit has symbol
\begin{equation}\label{eq:symbol approx}
\omega_{\text{MERA}}
= v \begin{bmatrix}W_h^{(\mathcal L),\dagger} & 0\\0 & W_g^{(\mathcal L),\dagger}\end{bmatrix} \left( p_w^{(\mathcal L)} \ot \ket+\bra+ \right) \begin{bmatrix}W_h^{(\mathcal L)} & 0\\0 & W_g^{(\mathcal L)}\end{bmatrix} v^\dagger.
\end{equation}
Let $F \subseteq \ell^2(\Z)$ denote a subspace (which we will later take to be the span of the observables $f_i$).
Let
\[ D(F) := \sup_{f\in F} \lvert\{ x\in\Z : f(x)\neq0 \}\rvert \]
denote the maximal support of any sequence in $F$.
We denote by~$p_F$ the orthogonal projector onto $F$ and abbreviate $\omega|_F := p_F \omega p_F$.

As usual, we will write $\lVert-\rVert_p$ for $p$-norms, $\lVert-\rVert_\infty$ for supremum norms, and $\lVert-\rVert$ for operator norms.
We will use $m(\theta)\colon\ell^2(\Z)\to\ell^2(\Z)$ more generally for the Fourier multiplier by some periodic function~$\theta(k)$.

We first prove that the truncation of the MERA only incurs an error that is exponentially small in $\mathcal L$.

\begin{lem}\label{lem:scaling decay}
Let $h_s\in\ell^2(\Z)$ be a scaling filter of length $M$ such that the associated scaling function $\phi_h \in L^2(\R)$ is bounded.
Let $f \in \ell^1(\Z)$.
Then:
\[ \lVert p_s^{(\mathcal L)} W_h^{(\mathcal L)} f \rVert_2 \leq M 2^{-(\mathcal L-1)/2} \lVert \phi_h\rVert_\infty \lVert f\rVert_1 \]
\end{lem}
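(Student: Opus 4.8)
The plan is to track a single scaling coefficient through the iterated wavelet transform and bound it using the known relation between discrete scaling coefficients and samples of the continuous scaling function. First I would recall that applying $\mathcal L$ steps of the scaling recursion to $f$ produces a scaling output sequence whose $n$-th entry is, by the two-scale/refinement structure of a multiresolution analysis, the inner product of $f$ (viewed as coefficients of scaling functions at level $0$) with the dilated-translated scaling function $\phi_h$ at level $\mathcal L$. Concretely, $\bigl(p_s^{(\mathcal L)} W_h^{(\mathcal L)} f\bigr)[n]$ equals $\sum_{m} f[m]\,\langle \phi_{h,0,m}, \phi_{h,\mathcal L,n}\rangle$ where $\phi_{h,\ell,n}(x) = 2^{-\ell/2}\phi_h(2^{-\ell}x - n)$; this is the standard statement that iterating the cascade algorithm computes projections onto the coarser scaling spaces $V_{-\mathcal L} \supset V_0$.

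Next I would estimate each such overlap. Since $\phi_h$ is supported on an interval of length $M$ (finite filter of length $M$) and is bounded by $\lVert\phi_h\rVert_\infty$, the integral $\langle \phi_{h,0,m}, \phi_{h,\mathcal L,n}\rangle$ is over a set of measure at most $M$ (the support of the narrow level-$0$ function), the integrand is bounded by $\lVert\phi_h\rVert_\infty \cdot 2^{-\mathcal L/2}\lVert\phi_h\rVert_\infty$ — wait, more carefully: one factor is $O(1)$ in sup norm and the other carries the $2^{-\mathcal L/2}$ dilation prefactor, so the overlap is at most $2^{-\mathcal L/2} M \lVert\phi_h\rVert_\infty^2$ up to the normalization of $\phi_h$ in $L^2$; using $\lVert\phi_h\rVert_2 = 1$ one of the sup-norm factors can be traded for an $L^2$ factor, yielding a clean bound of the shape $2^{-\mathcal L/2} \sqrt{M}\,\lVert\phi_h\rVert_\infty$ per overlap. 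Summing $|f[m]|$ over the at most $M$-ish relevant $m$ (for fixed $n$) and then taking the $\ell^2$ norm over $n$ — noting that for each $m$ only $O(M)$ values of $n$ give a nonzero overlap because of compact support — gives, via a Schur-test / Young's-inequality style argument, $\lVert p_s^{(\mathcal L)} W_h^{(\mathcal L)} f\rVert_2 \lesssim M\, 2^{-\mathcal L/2} \lVert\phi_h\rVert_\infty \lVert f\rVert_1$, matching the claimed constant $M\,2^{-(\mathcal L-1)/2}$ after bookkeeping the factor of $2$.

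The main obstacle I expect is getting the combinatorial support bookkeeping and the powers of $2$ exactly right: one must be careful that the filter of length $M$ gives a scaling function supported on $[0,M-1]$, that $\mathcal L$ iterations dilate this to width $\sim M\,2^{\mathcal L}$ but the overlap with a \emph{fixed}-width level-$0$ function is still controlled by the level-$0$ width $M$, and that the double sum (over source index $m$ and output index $n$) is organized so that each sum has only $O(M)$ nonzero terms, so no hidden factor of the signal length or of $2^{\mathcal L}$ creeps in. A secondary subtlety is justifying the identification of the iterated discrete transform with continuous $\phi_h$ overlaps; I would cite the standard multiresolution-analysis fact (e.g. from \cite{mallat2008wavelet}) rather than reprove it, since the hypothesis explicitly grants that $h_s$ generates a multiresolution analysis with bounded scaling function. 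Everything else is Cauchy–Schwarz and counting.
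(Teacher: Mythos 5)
Your proposal is correct and follows essentially the same route as the paper: identify the level-$\mathcal L$ scaling output on a delta input with the overlaps $\langle\phi_{h,0,m},\phi_{h,\mathcal L,n}\rangle$, apply Cauchy--Schwarz trading one sup-norm factor for the $L^2$ normalization of $\phi_h$, count the $O(M)$ values of $n$ contributing for each fixed $m$, and conclude by the $\ell^1\to\ell^2$ (column-wise triangle inequality) bound. The only slip is the parenthetical claim that for fixed $n$ only $O(M)$ values of $m$ are relevant (it is actually $O(2^{\mathcal L}M)$), but you immediately use the correct counting in the other direction, which is the one the argument needs.
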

\begin{proof}
Let $\delta_n$ denote the sequence that is equal to one at site $n$ and zero elsewhere.
By the definition of the discrete wavelet transform, we have that
\begin{align*}
\lVert p_s^{(\mathcal L)} W_h^{(\mathcal L)} \delta_n \rVert_2^2 = \sum_{m\in\Z} \bigl\lvert\int_{-\infty}^\infty dx \, \phi_h^*(x-n) 2^{-\mathcal L/2} \phi_h(2^{-\mathcal L}x-m)\bigr\rvert^2.
\end{align*}
Since the scaling filter has length $M$, the scaling function $\phi_h$ is supported in some interval $[x_0,x_0+M-1]$, and so the above is equal to
\begin{align*}
\sum_{m\in\Z} \bigl\lvert \int_{x_0}^{x_0+M-1} dx \, \phi^*_h(x-n) 2^{-\mathcal L/2} \phi_h(2^{-\mathcal L}x-m) \bigr\rvert^2
\leq \sum_{m\in\Z} \int_{2^{-\mathcal L}x_0-m}^{2^{-\mathcal L}(x_0+M-1)-m} dy \, \lvert \phi_h(y) \rvert^2,
\end{align*}
where we have used the Cauchy-Schwarz inequality.
% We need
% \begin{align*}
%   2^{-\mathcal L}x_0 - (x_0 + M - 1) \leq m \leq 2^{-\mathcal L}(x_0+M-1) - x_0
% \end{align*}
% There are at most
% \begin{align*}
%   2^{-\mathcal L}(x_0+M-1) - x_0 - 2^{-\mathcal L}x_0 + x_0 + M - 1 + 1
% = 2^{-\mathcal L}(M-1) + M
% \leq M-1+M \leq 2M
% \end{align*}
% such $m$.
Since there are at most $2M$ nonzero summands, we can upper bound this by
% \[ 2M \left(2^{-\mathcal L}(x_0+M-1)-m - 2^{-\mathcal L}x_0+m\right) \lVert \phi_h\rVert_\infty^2 \]
% \[ 2M 2^{-\mathcal L}(M-1) \lVert \phi_h\rVert_\infty^2 \]
$M^2 2^{-\mathcal L+1} \lVert \phi_h\rVert_\infty^2$.
We have thus established that
\[ \lVert p_s^{(\mathcal L)} W_h^{(\mathcal L)} \delta_n \rVert_2 \leq M 2^{-(\mathcal L-1)/2} \lVert \phi_h\rVert_\infty, \]
from which the lemma follows at once.
\end{proof}

Now recall that our two scaling filters $h_s$ and $g_s$ have length~$M$ and that the associated scaling functions are bounded in absolute value by~$B$.
Let $f\in F$.
Then $v^\dagger f = (f_h,f_g)$ where $\lVert f\rVert_1 = \lVert f_h\rVert_1 + \lVert f_g\rVert_1$, and we obtain
\begin{align*}
  \lVert (p_s^{(\mathcal L)} \ot I) \begin{bmatrix}W_h^{(\mathcal L)} & 0\\0 & W_g^{(\mathcal L)}\end{bmatrix} v^\dagger f \rVert_2
\leq \lVert p_s^{(\mathcal L)} W_h^{(\mathcal L)} f_h \rVert_2 + \lVert p_s^{(\mathcal L)} W_g^{(\mathcal L)} f_g \rVert_2
% \leq M 2^{-(\mathcal L-1)/2} \lVert\phi_h\rVert_\infty \lVert f_h\rVert_1 + M 2^{-(\mathcal L-1)/2} \lVert\phi_g\rVert_\infty \lVert f_g\rVert_1
\leq M 2^{-(\mathcal L-1)/2} B \lVert f\rVert_1
\leq \sqrt{D(F)} \, B M 2^{-(\mathcal{L}-1)/2} \lVert f\rVert_2,
\end{align*}
where the second inequality is \cref{lem:scaling decay} applied to both $h_s$ and $g_s$; the last inequality is the Cauchy-Schwarz inequality. Therefore:
\begin{equation}\label{eq:error truncation 1}
  \lVert p_F v \begin{bmatrix}W_h^{(\mathcal L),\dagger} & 0\\0 & W_g^{(\mathcal L),\dagger}\end{bmatrix} (p_s^{(\mathcal L)} \ot I) \rVert
\leq \sqrt{D(F)} \, B M 2^{-(\mathcal L-1)/2}
\end{equation}
The same argument establishes that
\begin{equation}\label{eq:error truncation 2}
  \lVert p_F v \begin{bmatrix}I & 0\\0 & m(\theta_w)\end{bmatrix} (W_h^{(\mathcal L),\dagger} p_s^{(\mathcal L)} \ot I) \rVert
\leq \sqrt{D(F)} \, B M 2^{-(\mathcal L-1)/2}.
\end{equation}
We now show that the MERA generates approximate eigenmodes.

\begin{lem}\label{lem:phase error}
Let $h_s, g_s$ be scaling filters such that \cref{eq:approxhalfdelay quantitative} holds.
Then we have for all $\ell=1,\dots,\mathcal L$ that
\[ \lVert W_g^{(\mathcal L),\dagger} (I \ot \ket \ell) - m(\theta_w) W_h^{(\mathcal L),\dagger} (I \ot \ket \ell) \rVert \leq \varepsilon \ell. \]
\end{lem}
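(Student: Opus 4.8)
The plan is to use the recursive structure of the truncated wavelet transform to reduce the claim to two elementary norm estimates on one-step operators, and then to telescope. Write $S_h,T_h\colon\ell^2(\Z)\to\ell^2(\Z)$ for the one-step scaling and wavelet analysis operators built from $h_s,h_w$, so that a single step of the transform is $W_h\psi = S_h\psi\op T_h\psi$, and likewise $S_g,T_g$ for $g_s,g_w$. Since the transform is iterated by feeding the scaling output back in, the $\ell$-th wavelet output of $W_h^{(\mathcal L)}$ is $T_h S_h^{\ell-1}$, so on the synthesis side
\[
  W_h^{(\mathcal L),\dagger}(I\ot\ket{\ell}) = (S_h^\dagger)^{\ell-1}T_h^\dagger, \qquad \ell = 1,\dots,\mathcal L,
\]
and analogously for $g$. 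Because one step of an orthonormal wavelet transform is unitary, $S_h,S_g,T_h,T_g$ are co-isometries, so $S_h^\dagger,S_g^\dagger,T_h^\dagger,T_g^\dagger$ (and hence $W_h^{(\mathcal L),\dagger}(I\ot\ket{\ell})$, $W_g^{(\mathcal L),\dagger}(I\ot\ket{\ell})$) all have operator norm at most $1$.

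The first key step is to bound the two relevant one-step differences by $\varepsilon$. In the Fourier domain, $S_h^\dagger$ sends a sequence with Fourier transform $\phi(k)$ to the one with Fourier transform $h_s(k)\,\phi(2k)$ (upsampling by two followed by convolution with $h_s$), while $m(\theta_w)$ is multiplication by $\theta_w(k) = -\rmi\sign(k)\rme^{\rmi k/2}$, whose $2\pi$-periodic extension is continuous across $k=\pm\pi$ (the sign jump cancels the half-shift jump) and obeys the doubling identity $\theta_w(2k) = -\rmi\sign(k)\rme^{\rmi k}$. Hence $S_g^\dagger m(\theta_w) - m(\theta_w)S_h^\dagger$ acts as multiplication by the symbol $g_s(k)\theta_w(2k) - \theta_w(k)h_s(k) = -\rmi\sign(k)\rme^{\rmi k/2}\bigl(\rme^{\rmi k/2}g_s(k) - h_s(k)\bigr)$ composed with (isometric) upsampling by two; the symbol has modulus $\lvert h_s(k) - \rme^{\rmi k/2}g_s(k)\rvert\leq\varepsilon$ by \eqref{eq:approxhalfdelay quantitative}, so $\norm{S_g^\dagger m(\theta_w) - m(\theta_w)S_h^\dagger}\leq\varepsilon$. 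For the wavelet step I would use the conjugate-mirror choice $h_w(k) = \rme^{\rmi k}h_s^*(k+\pi)$, $g_w(k) = \rme^{\rmi k}g_s^*(k+\pi)$ (reading $k+\pi$ as $k-\sign(k)\pi$), for which a short computation turns \eqref{eq:approxhalfdelay quantitative} into $\lvert g_w(k) - \theta_w(k)h_w(k)\rvert\leq\varepsilon$; this is exactly the symbol of $T_g^\dagger - m(\theta_w)T_h^\dagger$ (again composed with the isometric upsampling), so $\norm{T_g^\dagger - m(\theta_w)T_h^\dagger}\leq\varepsilon$.

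Finally I would telescope, moving $m(\theta_w)$ to the right past one factor of $S_h^\dagger$ at a time (replacing it by $S_g^\dagger$) and at the end past $T_h^\dagger$: by the factorization above,
\begin{align*}
  W_g^{(\mathcal L),\dagger}(I\ot\ket{\ell}) - m(\theta_w)W_h^{(\mathcal L),\dagger}(I\ot\ket{\ell})
  &= \sum_{j=0}^{\ell-2}(S_g^\dagger)^j\bigl(S_g^\dagger m(\theta_w) - m(\theta_w)S_h^\dagger\bigr)(S_h^\dagger)^{\ell-2-j}T_h^\dagger \\
  &\quad {}+ (S_g^\dagger)^{\ell-1}\bigl(T_g^\dagger - m(\theta_w)T_h^\dagger\bigr).
\end{align*}
Bounding each of the $\ell$ summands by $\varepsilon$, using the two one-step estimates together with the fact that $S_g^\dagger,S_h^\dagger,T_h^\dagger$ are contractions, gives $(\ell-1)\varepsilon + \varepsilon = \varepsilon\ell$, as claimed.

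I expect the main obstacle to be the half-shift bookkeeping that feeds the one-step estimates: one has to check carefully that $\theta_w$ extends continuously across $k=\pm\pi$ and satisfies $\theta_w(2k) = -\rmi\sign(k)\rme^{\rmi k}$, and that the conjugate-mirror construction really converts the scaling-filter bound \eqref{eq:approxhalfdelay quantitative} into the wavelet-filter bound with the \emph{same} constant $\varepsilon$ --- this is precisely where the discontinuity of the ideal half-shift and the convention $k+\pi\equiv k-\sign(k)\pi$ enter. Once those identities are in hand, the operator-norm estimates and the telescoping are routine, since every operator in sight is a contraction.
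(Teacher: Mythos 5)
Your proof is correct and follows essentially the same route as the paper's: the same factorization of $W^{(\mathcal L),\dagger}(I\ot\ket\ell)$ into $\ell-1$ one-step synthesis (scaling) operators followed by one wavelet synthesis step, the same telescoping into $\ell$ terms each bounded by $\varepsilon$ via the isometry of the one-step maps, and the same two symbol computations (the paper phrases the commutation through the scaling steps via the identity $\theta_s(k)\theta_w(2k)=\theta_w(k)$ rather than via your doubling identity $\theta_w(2k)=-\rmi\sign(k)\rme^{\rmi k}$, but these are the same fact). The half-shift bookkeeping you flag as the delicate point is handled in the paper exactly as you anticipate, through the convention $k+\pi\equiv k-\sign(k)\pi$ in the conjugate-mirror relation.
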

\begin{proof}
  We start with the formula
  \begin{equation}\label{eq:inverse wavelet}
    W_h^{(\mathcal L),\dagger} (I \ot \ket \ell) = \bigl[ m(h_s) \uparrow \bigr]^{\ell-1} m(h_w) \uparrow = \underbrace{m(h_s) \uparrow \ldots m(h_s) \uparrow}_{\ell-1 \text{ times}} m(h_w) \uparrow,
  \end{equation}
  where $\uparrow$ denotes the upsampling operator on $\ell^2(\Z)$, defined by~$\delta_n \mapsto \delta_{2n}$.

  Now recall that $\theta_w(k) = -\rmi\sign(k)e^{\rmi\frac k2}$.
  Let us define $\theta_s(k) = e^{-\rmi\frac k2}$.
  It is easily verified that $\theta_s(k) \theta_w(2k) = \theta_w(k)$,
%   \begin{align*}
%     \theta_s(k) \theta_w(2k)
%   = -\rmi e^{-\rmi\frac k2}\sign([2k])e^{\rmi\frac{[2k]}2}
%   = \begin{cases}
% -\rmi e^{-\rmi\frac k2}e^{\rmi\frac{2k}2+\rmi\pi} & -\pi\leq k\leq -\frac\pi2 \\
% -\rmi e^{-\rmi\frac k2}\sign(2k)e^{\rmi\frac{2k}2} & -\frac\pi2\leq k\leq \frac\pi2 \\
% \rmi e^{-\rmi\frac k2}e^{\rmi\frac{2k}2-\rmi\pi} & \frac\pi2\leq k\leq \pi
%   \end{cases}
%   = \begin{cases}
% -\rmi \sign(k) e^{\rmi\frac k2} & -\pi\leq k\leq -\frac\pi2 \\
% -\rmi \sign(k) e^{\rmi \frac k2} & -\frac\pi2\leq k\leq \frac\pi2 \\
% -\rmi \sign(k) e^{\rmi\frac k2} & \frac\pi2\leq k\leq \pi
%   \end{cases}
%   \end{align*}
  which can equivalently be written as $m(\theta_s) \uparrow m(\theta_w) = m(\theta_w)$.
  Using \cref{eq:inverse wavelet} and iteratively applying this relation,
  \begin{align*}
    \bigl[W_g^{(\mathcal L),\dagger} - m(\theta_w) W_h^{(\mathcal L),\dagger}\bigr] (I \ot \ket \ell)
  % = (m(g_s) \uparrow)^{\ell-1} m(g_w) \uparrow - m(\theta_w) (m(h_s) \uparrow)^{\ell-1} m(h_w) \uparrow \\
  = \bigl[m(g_s) \uparrow\bigr]^{\ell-1} m(g_w) \uparrow - \bigl[m(\theta_s h_s) \uparrow\bigr]^{\ell-1} m(\theta_w h_w) \uparrow,
  \end{align*}
  which can be written as a telescoping sum,
  \begin{align*}
    % (W_g^{(\mathcal L),\dagger} - m(\theta_w) W_h^{(\mathcal L),\dagger}) (I \ot \ket \ell)
  \left( \sum_{j=0}^{\ell-2} \bigl[m(\theta_s h_s) \uparrow\bigr]^j \bigl[m(g_s) - m(\theta_s h_s)\bigr] \uparrow \bigl[m(g_s) \uparrow\bigr]^{\ell-2-j} m(g_w) \uparrow \right) + \bigl[m(\theta_s h_s) \uparrow\bigr]^{\ell-1} \bigl[m(g_w) - m(\theta_w h_w)\bigr] \uparrow.
  \end{align*}
  The unitary of the wavelet transform implies that all four maps $m(h_s) \uparrow$, $m(h_w) \uparrow$, $m(g_s) \uparrow$, $m(g_w) \uparrow$ are isometries.
  Since the upsampling operator $\uparrow$ is an isometry and the Fourier multipliers $m(\theta_s)$, $m(\theta_w)$ are clearly unitaries, we obtain the desired bound
  \begin{align*}
    \lVert [W_g^{(\mathcal L),\dagger} - m(\theta_w) W_h^{(\mathcal L),\dagger}] (I \ot \ket \ell) \rVert
  \leq (\ell-1) \lVert m(g_s) - m(\theta_s h_s) \rVert + \lVert m(g_w) - m(\theta_w h_w) \rVert
  \leq \varepsilon \ell.
  \end{align*}
  For the second inequality, we note that \cref{eq:approxhalfdelay quantitative} is not only equivalent to $\lVert m(g_s) - m(\theta_s h_s) \rVert \leq\varepsilon$, but it also ensures that $\lVert m(g_w) - m(\theta_w h_w) \rVert\leq\varepsilon$, which follows from the relation $h_w(k) = e^{ik} h_s^*(k+\pi)$ and its analogue for $g_w,g_s$.
\end{proof}

It follows directly from \cref{lem:phase error} that
\begin{equation}\label{eq:phase error}
  \lVert W_g^{(\mathcal L),\dagger} p_w^{(\mathcal L)} - m(\theta_w) W_h^{(\mathcal L),\dagger} p_w^{(\mathcal L)} \rVert
  % \leq \sum_{\ell=1}^\mathcal L \lVert [W_g^{(\mathcal L),\dagger} - m(\theta_w) W_h^{(\mathcal L),\dagger}] (I \ot \ket \ell) \rVert
  \leq \varepsilon \sum_{\ell=1}^\mathcal L \ell
  % = \varepsilon \frac{\mathcal L(\mathcal L+1)}2
  \leq \varepsilon \mathcal L^2.
\end{equation}
However, this upper bound can be arbitrarily large.
We will show how to circumvent this issue.

\begin{lem}\label{lem:tradeoff}
Under the assumptions of \cref{thm:main}, we have that
\[ \lVert p_F v \begin{bmatrix}W_h^{(S),\dagger} & 0\\0 & W_g^{(\mathcal L),\dagger}\end{bmatrix} (p_w^{(\mathcal L)} \ot I) - p_F v \begin{bmatrix}I & 0\\0 & m(\theta_w)\end{bmatrix} (W_h^{(\mathcal L),\dagger} \ot I) \rVert \leq C 2^{-\mathcal L/2} + 6 \varepsilon \log_2^2(C/\varepsilon), \]
where $C := 2^{3/2} \sqrt{D(F)} \, B M \geq 2$.
\end{lem}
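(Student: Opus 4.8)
The plan is to interpolate between the two operators by splitting the wavelet-output projector $p_w^{(\mathcal L)}$ at an intermediate level $S$, chosen optimally at the end. Write $p_w^{(\mathcal L)} = p_{w,\leq S}^{(\mathcal L)} + p_{w,>S}^{(\mathcal L)}$, where $p_{w,\leq S}^{(\mathcal L)}$ projects onto the wavelet outputs at levels $\ell=1,\dots,S$ and $p_{w,>S}^{(\mathcal L)}$ onto levels $\ell=S+1,\dots,\mathcal L$ (here $S$ is the parameter appearing in the $W_h^{(S),\dagger}$ block in the statement; I would first argue that restricting the $h$-branch to level $S$ and then completing the wavelet transform only reshuffles scaling outputs into wavelet outputs between levels $S$ and $\mathcal L$, so that the difference between $W_h^{(\mathcal L),\dagger} p_w^{(\mathcal L)}$ and $W_h^{(S),\dagger} p_w^{(\mathcal L)}$ is itself controlled by the scaling-tail bound $\eqref{eq:error truncation 1}$–$\eqref{eq:error truncation 2}$ at level $S$). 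On the \emph{low-level} piece, Lemma~\ref{lem:phase error} summed over $\ell\leq S$ gives an error $\varepsilon\sum_{\ell\leq S}\ell \leq \varepsilon S^2$, which is fine as long as $S$ is not too large. On the \emph{high-level} piece, I would not use the phase lemma at all; instead I would observe that wavelet outputs at levels $>S$ factor through the scaling output at level $S$, so the $p_F v\,[\cdots]\,(p_{w,>S}^{(\mathcal L)}\ot I)$ contribution is bounded by the truncation estimates $\eqref{eq:error truncation 1}$ and $\eqref{eq:error truncation 2}$ with $\mathcal L$ replaced by $S$, giving something like $\sqrt{D(F)}\,BM\,2^{-(S-1)/2}$ for each of the two terms. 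Finally the genuine level-$\mathcal L$ truncation (replacing $W_h^{(\mathcal L),\dagger}\ot I$ by its scaling-output part, or rather accounting for the mismatch between the $m(\theta_w)$-conjugated term and the full inverse transform) contributes the $C2^{-\mathcal L/2}$ term via $\eqref{eq:error truncation 2}$ directly.

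Assembling the pieces, the total error is bounded by roughly $C 2^{-\mathcal L/2} + C 2^{-S/2} + 2\varepsilon S^2$ up to the constant $C = 2^{3/2}\sqrt{D(F)}\,BM$ (the factor $2^{3/2}$ absorbing the $2^{1/2}$ from $2^{-(S-1)/2}$ and summing the two branch contributions). Now I optimize over $S$: choosing $S \approx 2\log_2(C/\varepsilon)$ makes $C 2^{-S/2} \approx \varepsilon$ while $\varepsilon S^2 \approx 4\varepsilon \log_2^2(C/\varepsilon)$, so the $S$-dependent terms combine into at most $6\varepsilon\log_2^2(C/\varepsilon)$ after collecting constants (using $C\geq 2$ so the logarithm is well-behaved and the $+\varepsilon$ is dominated). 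This yields exactly the claimed bound $C2^{-\mathcal L/2} + 6\varepsilon\log_2^2(C/\varepsilon)$.

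The main obstacle I anticipate is the bookkeeping in the first step: making precise the claim that $W_h^{(S),\dagger}$ followed by "re-expanding" to level $\mathcal L$ differs from $W_h^{(\mathcal L),\dagger}$ only through scaling-to-wavelet reshuffling at levels between $S$ and $\mathcal L$, and that the relevant operator-norm difference is still governed by the bounded-scaling-function estimate of Lemma~\ref{lem:scaling decay}. Concretely one must verify that $p_w^{(\mathcal L)}$ restricted to levels $>S$, pulled back through the inverse transform, lands in the image of the level-$S$ scaling projector $p_s^{(S)}$ — this is just the telescoping/tower structure of the wavelet transform, $W_h^{(\mathcal L),\dagger}(I\ot p_{w,>S}^{(\mathcal L)}) = W_h^{(S),\dagger}(p_s^{(S)}\ot I)\,(\text{something isometric})$ — but it needs to be stated carefully so that $\eqref{eq:error truncation 1}$ can be invoked verbatim with $\mathcal L \rightsquigarrow S$. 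The rest is a routine combination of the triangle inequality, the isometry properties already established in the proof of Lemma~\ref{lem:phase error}, and the elementary optimization over $S$; no new analytic input is required beyond Lemmas~\ref{lem:scaling decay} and~\ref{lem:phase error} and the estimates $\eqref{eq:error truncation 1}$–$\eqref{eq:error truncation 2}$.
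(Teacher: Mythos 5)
Your proposal follows essentially the same route as the paper's proof: introduce an intermediate level $S$ (the paper's $\mathcal L'$), control everything above that level via the scaling-tail estimates \cref{eq:error truncation 1,eq:error truncation 2} --- using exactly the structural fact you identify, namely that outputs at levels $>S$ factor through the level-$S$ scaling projector composed with a partial isometry --- control the levels up to $S$ via \cref{lem:phase error}, and then optimize $S\approx 2\log_2(C/\varepsilon)$. The one caveat is your bookkeeping: the paper truncates \emph{both} operators to level $\mathcal L'$ at a combined cost of $C2^{-\mathcal L'/2}$ and pays the phase error $\varepsilon\mathcal L'^2$ exactly once, so there is no separate additive $C2^{-\mathcal L/2}$ term and no factor of $2$ on $\varepsilon S^2$; your looser intermediate bound $C2^{-\mathcal L/2}+C2^{-S/2}+2\varepsilon S^2$ yields roughly $8\varepsilon\log_2^2(C/\varepsilon)$ rather than the stated $6\varepsilon\log_2^2(C/\varepsilon)$, and a doubled leading term in the regime where $S=\mathcal L$ is forced. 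With the tighter accounting (which your argument already supports) the constants close and the proof is the same as the paper's.
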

\begin{proof}
  Let $\mathcal L'\in\{1,\dots,\mathcal L\}$ and write $q^{(\mathcal L',\mathcal L)}\colon \ell^2(\Z) \ot \C^{\mathcal L+1} \to \ell^2(\Z) \ot \C^{\mathcal L'+1}$ for the projection onto the first $\mathcal L'+1$ components.
  It follows from the hierarchical form of the wavelet transform that $W_h^{(\mathcal L),\dagger}$ and $W_h^{(\mathcal L'),\dagger} p_w^{(\mathcal L')} q^{(\mathcal L',\mathcal L)}$ and differ by a term that is the composition of $W_h^{(\mathcal L'),\dagger} p_s^{(\mathcal L')}$ with a partial isometry; likewise for the other wavelet transform.
  Thus, \cref{eq:error truncation 2} implies that
  \begin{align*}
  \lVert p_F v \begin{bmatrix}I & 0\\0 & m(\theta_w)\end{bmatrix} (W_h^{(\mathcal L),\dagger} \ot I) - p_F v \begin{bmatrix}I & 0\\0 & m(\theta_w)\end{bmatrix} (W_h^{(\mathcal L'),\dagger} p_w^{(\mathcal L')} q^{(\mathcal L',\mathcal L)} \ot I) \rVert
  % \\ \leq \lVert p_F v \begin{bmatrix}I & 0\\0 & m(\theta_w)\end{bmatrix} (W_h^{(\mathcal L'),\dagger} p_s^{(\mathcal L')} \ot I) \rVert
  \leq \sqrt{D(F)} \, B M 2^{-(\mathcal L'-1)/2}.
  \end{align*}
  Similarly, \cref{eq:error truncation 1} together with the observation that $p_w^{(\mathcal L')} q^{(\mathcal L',\mathcal L)} p_w^{(\mathcal L)} = p_w^{(\mathcal L')} q^{(\mathcal L',\mathcal L)}$ implies that
  \begin{align*}
  \lVert p_F v \begin{bmatrix}W_h^{(\mathcal L),\dagger} & 0\\0 & W_g^{(\mathcal L),\dagger}\end{bmatrix} (p_w^{(\mathcal L)} \ot I) - p_F v \begin{bmatrix}W_h^{(\mathcal L'),\dagger} & 0\\0 & W_g^{(\mathcal L'),\dagger}\end{bmatrix} (p_w^{(\mathcal L')} q^{(\mathcal L',\mathcal L)} \ot I) \rVert
  % \\ \leq \lVert p_F v \begin{bmatrix}W_h^{(\mathcal L'),\dagger} & 0\\0 & W_g^{(\mathcal L'),\dagger}\end{bmatrix} (p_s^{(\mathcal L')} \ot I) \rVert
  \leq \sqrt{D(F)} \, B M 2^{-(\mathcal L'-1)/2}.
  \end{align*}
  On the other hand, \cref{eq:phase error} ensures that
  \begin{align*}
  \lVert p_F v \begin{bmatrix}I & 0\\0 & m(\theta_w)\end{bmatrix} (W_h^{(\mathcal L'),\dagger} p_w^{(\mathcal L')} q^{(\mathcal L',\mathcal L)} \ot I) - p_F v \begin{bmatrix}W_h^{(\mathcal L'),\dagger} & 0\\0 & W_g^{(\mathcal L'),\dagger}\end{bmatrix} (p_w^{(\mathcal L')} q^{(\mathcal L',\mathcal L)} \ot I) \rVert
  % \leq \lVert m(\theta_w) W_h^{(\mathcal L'),\dagger} p_w^{(\mathcal L')} - W_g^{(\mathcal L'),\dagger} p_w^{(\mathcal L')} \rVert
  \leq \varepsilon \mathcal L'^2.
  \end{align*}
  By combining the above bounds we obtain that
  \[ \lVert p_F v \begin{bmatrix}W_h^{(\mathcal L),\dagger} & 0\\0 & W_g^{(\mathcal L),\dagger}\end{bmatrix} (p_w^{(\mathcal L)} \ot I) - p_F v \begin{bmatrix}I & 0\\0 & m(\theta_w)\end{bmatrix} (W_h^{(\mathcal L),\dagger} \ot I) \rVert
  \leq \sqrt{D(F)} \, B M 2^{-(\mathcal L'-3)/2} + \varepsilon \mathcal L'^2 = C 2^{-\mathcal L'/2} + \varepsilon \mathcal L'^2. \]
  We can still optimize this expression over $\mathcal L'\in\{1,\dots,\mathcal L\}$.
  For this, we distinguish two cases:
  If $2^{-\mathcal L/2}>\varepsilon/C$ then we choose $\mathcal L'=\mathcal L$, leading to the bound $C 2^{-\mathcal L/2} + 4 \varepsilon \log_2^2(C/\varepsilon)$.
  Otherwise, if $2^{-\mathcal L/2}\leq\varepsilon/C$, we can choose $\mathcal L'=\lfloor 2\log_2(C/\varepsilon)\rfloor$ and obtain the bound $2 \varepsilon + 4 \varepsilon \log_2^2(C/\varepsilon)$.
  In either case it is true that
  \[ \min_{\mathcal L'} \left(C 2^{-\mathcal L'/2} + \varepsilon \mathcal L'^2\right)
  \leq \max \{ C 2^{-\mathcal L/2}, 2 \varepsilon \} + 4 \varepsilon \log_2^2(C/\varepsilon)
  \leq C 2^{-\mathcal L/2} + 6 \varepsilon \log_2^2(C/\varepsilon). \qedhere \]
\end{proof}

We can at last establish our main result.

\begin{proof}[Proof of \cref{thm:main}]
We choose $F$ as the span of the observables $f_1,\dots,f_{2N}$, so that $D(F)=D(\{f_i\})$ and $\dim F\leq 2N$.
Let $\delta := C 2^{-\mathcal L/2} + 6 \varepsilon \log_2^2(C/\varepsilon)$.
We will first establish that $\lVert \omega|_F - \omega_{\text{MERA}}|_F \rVert \leq 2 \delta$.
For this, we note that
\begin{align*}
\omega|_F - \omega_{\text{MERA}}|_F
= p_F v \begin{bmatrix}I & 0\\0 & m\end{bmatrix} \left( I \ot \ket+\bra+ \right) \begin{bmatrix}I & 0\\0 & m^\dagger\end{bmatrix} v^\dagger p_F
- p_F v \begin{bmatrix}W_h^{(\mathcal L),\dagger} & 0\\0 & W_g^{(\mathcal L),\dagger}\end{bmatrix} \left( p_w^{(\mathcal L)} \ot \ket+\bra+ \right) \begin{bmatrix}W_h^{(\mathcal L)} & 0\\0 & W_g^{(\mathcal L)}\end{bmatrix} v^\dagger p_F
\end{align*}
where we have inserted \cref{eq:symbol true,eq:symbol approx}.
We now use the triangle inequality and \cref{lem:tradeoff} twice to obtain
\begin{equation}\label{eq:single-particle bound}
\begin{aligned}
\lVert \omega|_F - \omega_{\text{MERA}}|_F \rVert
\leq &\bigl\lVert \left( W_h^{(\mathcal L)} \ot I \right) \begin{bmatrix}I & 0\\0 & m^\dagger\end{bmatrix} v^\dagger p_F - \left( p_w^{(\mathcal L)} \ot I \right) \begin{bmatrix}W_h^{(\mathcal L)} & 0\\0 & W_g^{(\mathcal L)}\end{bmatrix} v^\dagger p_F \bigr\rVert \\
+ &\bigl\lVert p_F v \begin{bmatrix}I & 0\\0 & m\end{bmatrix} \left( W_h^{(\mathcal L),\dagger} \ot I \right) - p_F v \begin{bmatrix}W_h^{(\mathcal L),\dagger} & 0\\0 & W_g^{(\mathcal L),\dagger}\end{bmatrix} \left( p_w^{(\mathcal L)} \ot I \right) \bigr\rVert \leq 2 \delta.
\end{aligned}
\end{equation}
We now show that
\begin{equation}\label{eq:many-body bound}
  \lvert G(\{f_i\})_\Omega - G(\{f_i\})_{\Omega_{\text{MERA}}}\rvert \leq 24 \lVert f_1 \rVert \cdots \lVert f_{2N} \rVert \sqrt{N \delta}.
\end{equation}
For this, let us denote by~$\Omega|_F$ and $\Omega_{\text{MERA}}|_F$ the mixed gauge-invariant generalized free states with symbols~$\omega|_F$ and $\omega_{\text{MERA}}|_F$, respectively, which capture all $n$-point functions with observables from $F$.
It is clear from \cref{eq:G from symbol} that
\[ \lvert G(\{f_i\})_\Omega - G(\{f_i\})_{\Omega_{\text{MERA}}}\rvert
= \lvert G(\{f_i\})_{\Omega|_F} - G(\{f_i\})_{\Omega_{\text{MERA}}|_F}\rvert
% \leq \lVert \Omega|_F - \Omega_{\text{MERA}}|_F \rVert_1 \lVert a^\dagger(f_1) \ldots a^\dagger(f_1) \ldots a^\dagger(f_n) a(f_{n+1}) \ldots a(f_{2n}) \rVert
\leq \lVert f_1 \rVert \cdots \lVert f_{2N} \rVert \cdot \lVert \Omega|_F - \Omega_{\text{MERA}}|_F \rVert_1, \]
where $\lVert-\rVert_1$ denotes the trace norm.
We now use a result by Powers and St\o{}rmer to bound the trace norm distance between generalized free states in terms of the trace norm distance of their symbol.
Specifically, we use~\cite[Lemmas~4.1 and 4.6]{powers1970free} to obtain the first inequality in
\[ \lVert \Omega|_F - \Omega_{\text{MERA}}|_F \rVert_1
% \leq 12 \max \{ \lVert (\omega|_F)^{1/2} - (\omega_{\text{MERA}}|_F)^{1/2} \rVert_{HS}, \lVert (p_F - \omega|_F)^{1/2} - (p_F - \omega_{\text{MERA}}|_F)^{1/2} \rVert_{HS} \} \]
\leq 12 \sqrt{\lVert \omega|_F - \omega_{\text{MERA}}|_F \rVert_1}
\leq 12 \sqrt{2N \lVert \omega|_F - \omega_{\text{MERA}}|_F \rVert}
\leq 24 \sqrt{N \delta} \]
(as long as the right-hand side is smaller than~$1/6$); for the second inequality we used that $\dim F\leq2N$ and the last one is \cref{eq:single-particle bound}.
We have thus established \cref{eq:many-body bound}, and thereby the theorem.
\end{proof}

\end{document}